\newtheorem{theorem}{Theorem}[section]
\newtheorem{proposition}[theorem]{Proposition}
\newtheorem{conj}[theorem]{Conjecture}
\newcounter{assumption}[section]
\newcommand{\indep}{\perp \!\!\! \perp}
\newcommand{\meig}{\rho_{\max}}
\newcommand{\tp}{\mathrm{T}}
\renewcommand{\t}[0]{\s{t}}
\newcounter{pfxc}[section]
\newcommand{\rvec}[1]{{\boldsymbol{\mathbf{\MakeLowercase{#1}}}}}
\newcommand{\rs}[1]{{\boldsymbol{\mathbf{\mathsf{{#1}}}}}}
\newcommand{\s}[1]{\mathsf{#1}}
\begin{document}
\title{Online variable-length source coding for minimum bitrate LQG control}
\author{Travis C. Cuvelier, Takashi Tanaka, and Robert W. Heath, Jr.
\thanks{The authors gratefully acknowledge that this work was supported in part by the US Air Force Research Laboratory under grant FA8651-22-1-0013.}
\thanks{T. Cuvelier and T. Tanaka are with the Oden Institute for Computational Engineering and Sciences, The University of Texas at Austin,
TX, 78712 USA e-mails: tcuvelier@utexas.edu, ttanaka@utexas.edu.}% -this % stops a space
\thanks{R. Heath is with the Department
of Electrical and Computer Engineering, North Carolina State University, Raleigh, 
NC, 27606 USA e-mail: rwheath2@ncsu.edu.}}

\maketitle

\begin{abstract} We propose an adaptive coding approach to achieve linear-quadratic-Gaussian (LQG) control with near-minimum bitrate prefix-free feedback. Our approach combines a recent analysis of a quantizer design for minimum rate LQG control with work on universal lossless source coding for sources on countable alphabets. In the aforementioned quantizer design, it was  established that the quantizer outputs are an asymptotically stationary, ergodic process. To enable LQG control with provably near-minimum bitrate, the quantizer outputs must be encoded into binary codewords efficiently. This is possible given knowledge of the probability distributions of the quantizer outputs, or of their limiting distribution. Obtaining such knowledge is challenging; the distributions do not readily admit closed form descriptions. This motivates the application of universal source coding. Our main theoretical contribution in this work is a proof that (after an invertible transformation), the quantizer outputs are random variables that fall within an exponential or power-law envelope class (depending on the plant dimension). Using ideas from universal coding on envelope classes, we develop a practical, zero-delay version of these algorithms that operates with fixed precision arithmetic. We evaluate the performance of this algorithm numerically, and demonstrate competitive results with respect to fundamental tradeoffs between bitrate and LQG control performance.  
\end{abstract}
\section{Introduction}\label{sec:introduction} 
In this work, we propose an algorithmic framework to achieve any feasible discrete-time LQG control performance with near minimum bitrate variable-length feedback. The motivation behind our problem formulation is feedback control where sensors and actuators are not co-located, and measurements must be fed back from sensor to controller over digital communications. In such settings, the feedback bitrate is an effective surrogate for the true \textit{cost} of communication. For various communication architectures, a communication link's bitrate corresponds more-or-less directly with the physical layer resources utilized by said link. 

While there are several results in the literature that pertain to LQG control with minimum (variable-length) feedback bitrate \cite{silvaFirst,tanakaISIT,kostinaTradeoff,ourJSAIT}, there are several barriers to applying these results in practice.  This work aims to bridge this gap. We propose practical algorithms that combine the approach to minimum rate LQG control from \cite{ourJSAIT} with approaches to universal coding on countably infinite alphabets from \cite{frenchOG,frenchExp,frenchSeq,frenchSeq2}. A numerically precise implementation then follows from a classical approach to arithmetic coding with fixed-precision arithmetic via \cite{wittenArithmetic} as well as Elias's ``penultimate" (omega) encoding for integers \cite{eliasUniversal}. We evaluate the proposed algorithms numerically on a model for an inverted pendulum system \cite{ssmodelinvertedpendulum}. Our approach is shown to achieve good performance with respect to the fundamental directed information bitrate lower bound (cf. \cite{ourConverseLetter}).  
\subsection{Literature Review} 
Data compression architectures for LQG control with near-minimum bitrate variable-length feedback were proposed in  \cite{silvaFirst,tanakaISIT,kostinaTradeoff,ourJSAIT}. These approaches propose to quantize sensor measurements with prescribed quantizer designs, and assume that the quantizers' outputs are losslessly encoded into binary packets. If this can be accomplished, the architectures in \cite{silvaFirst,tanakaISIT,kostinaTradeoff,ourJSAIT} provably attain bitrates close to established directed information (DI) lower bounds (cf. \cite{ourConverseLetter,kostinaTradeoff}). Accomplishing this lossless encoding, however, is quite difficult in practice. In \cite{silvaFirst,tanakaISIT,kostinaTradeoff,ourJSAIT} the quantizer outputs have countable support. The quantizer outputs are a nonstationary random process; in  \cite{silvaFirst,tanakaISIT,kostinaTradeoff} the lossless encoding is assumed to be adapted precisely to the output distribution. While  \cite{ourJSAIT} demonstrated that it suffices to use a time-invariant encoding adapted to the limiting distribution of the quantizer's outputs; it remains a challenging problem to estimate, and design a code for, this limiting distribution. In contrast to some recent work on LQG control with fixed-rate feedback (cf. e.g. \cite{fixedLenCoding}), there are few results on practical algorithms that have been shown to exhibit competitive performance with respect to established fundamental tradeoffs between bitrate and control performance (cf. \cite{ourConverseLetter,kostinaTradeoff}).

Despite recent positive results on LQG control with fixed-length feedback, the less restrictive setting of variable-length feedback offers some advantages. It is well established that a linear plant driven by unbounded process noise cannot be stabilized in the mean square sense with feedback that undergoes time-invariant, memoryless, fixed-length quantization \cite{nairevans}. While  several works on control with \textit{adaptive} fixed-length quantization guarantee stability, e.g.  \cite{yukselStabilization}, \cite{r3_add3}, they are not guaranteed to operate near the fundamental rate-cost tradeoff for LQG problems. In \cite{fixedLenCoding}, fixed-length quantizers that optimize LQG control performance were designed via the Lloyd-Max algorithm.  While numerical experiments exhibited competitive performance was achieved with respect to fundamental tradeoffs, the approach in \cite{fixedLenCoding} requires the Lloyd-Max quantizer to be ``re-designed" at every timestep. Stability is not guaranteed. In contrast, it is relatively easy to guarantee a fixed LQG performance with variable-length feedback; rare events that would effectively saturate fixed-length quantizers are encoded into long bitstrings. 

In this work, we combine the quantizer design and analysis from \cite{ourJSAIT} with relatively recent work on universal lossless source coding on countable alphabets. Broadly speaking, ``universal coding" refers to techniques for the design of lossless codes for sources with unknown distributions. The quantizer outputs from \cite{ourJSAIT} are an asymptotically stationary random process with countable support. Their limiting distribution evades closed form description. While there is a wealth of literature devoted to universal coding on alphabets with finite support, there are comparatively fewer results for alphabets with countably infinite support \cite{frenchOG,frenchExp,frenchSeq,frenchSeq2}. In particular,  \cite{frenchOG} introduced the notion of universal coding on  \textit{envelope classes}. An envelope class is a set of probability mass functions on $\mathbb{N}_{+}$ (interpreted as potential source distributions) that are upper bounded by a monotonically decreasing, $\ell^{1}$, \textit{envelope functions}. Relevant envelope functions include power-laws and exponentials. Universal block coding techniques for IID stationary sources with marginals that fall into envelope classes are introduced in \cite{frenchOG,frenchExp,frenchSeq,frenchSeq2}. The coding algorithms in 
\cite{frenchOG,frenchExp,frenchSeq,frenchSeq2} take a censoring approach; symbols that are below a (time-varying) \text{cutoff} are encoded using arithmetic coding, and large-valued symbol realizations that exceed the cutoff are treated differently. If the symbol value exceeds the cutoff, an escape symbol is encoded arithmetically and the observed symbol value (or the difference between the symbol value and the cuttoff) is encoded via one of Elias's universal prefix codes for integers \cite{eliasUniversal}. The probability model used for arithmetic coding is empirically adapted to reflect the source. 

While the algorithms in \cite{frenchOG,frenchExp,frenchSeq,frenchSeq2}  permit sequential encoding and decoding, they are fundamentally block codes. The use of arithmetic coding can require that bits from  $\s{n}$ encoded symbols be received before the first symbol can be  decoded. Given a block of $\s{n}$ symbols to encode, the algorithms in \cite{frenchOG} and \cite{frenchExp} can be shown to achieve a redundancy (e.g. a codeword length in excess of the entropy of the source) that has  $o(\s{n})$. If, instead of using arithmetic coding, the symbols below the cuttoff are encoded with a Shannon-Fano-Elias (SFE) code (\cite[Section IV.A.1]{ourJSAIT}) using the empirical source model, the encoding schemes in \cite{frenchOG,frenchExp,frenchSeq,frenchSeq2} can be made to incur zero encoding and decoding delay at the expense of at most $1$ bit per encoded symbol. We discuss \cite{frenchOG,frenchExp,frenchSeq,frenchSeq2} in the context of our present work in the next subsection. 

\subsection{Our contributions} This work makes one theoretical and one practical contribution. On the theoretical side, we prove that, after a transformation, the limiting distribution of the quantizer outputs in \cite{ourJSAIT}'s source coding architecture fall into an exponential or power-law envelope class (depending on the plant dimension). The approaches in \cite{frenchOG,frenchExp,frenchSeq,frenchSeq2} apply to \textit{stationary} sources. Given the asymptotic stationarity and the Kullback–Leibler (KL) sense convergence of the quantizer's outputs to their limiting distribution in \cite{ourJSAIT}, we conjecture that using a ``zero-delay version" of an algorithm from \cite{frenchOG} or \cite{frenchExp} to encode the quantizer's outputs will achieve the same bitrates predicted by \cite[Theorem IV.3 (ii)]{ourJSAIT}. 

The coding schemes in \cite{frenchOG,frenchExp,frenchSeq,frenchSeq2} are not strictly practical, as they require arithmetic-style prefix coding to be performed on larger and larger alphabets. This requires that the (arithmetic) precision with which calculations are performed at the encoder and decoder to likewise expand \cite{wittenArithmetic}. We devise a simplified, zero-delay version of the censoring codes using fixed-precision arithmetic. We demonstrate competitive performance with respect to the lower bounds from \cite{ourConverseLetter} and the upper bounds in \cite{ourJSAIT}. 

\subsection{Notation} We use lower case serif letters for vectors $v$, upper case serif letters for matrices $V$, and sans serif lower case letters $\s{v}$ for scalars. Random variables are in boldface, e.g. a random vector is denoted $\rvec{v}$. For sequences $\{{x}_{0},{x}_{1},\dots\}$ we let ${x}^{\s{n}}$ denote $\{ {x}_{0}, 
\dots ,{x}_{\s{n}} \}$. If $x\in\mathbb{R}^{\s{m}}$, for $\s{i}\in \{0,\dots,\s{m}-1\}$ we let $[x]_{\s{i}}$ denote the element of $x$ in position $\s{i} $ (we use zero-based indexing). If $\rs{x}$ is a discrete random variable, we let $\mathbb{P}_{\rs{x}}(\s{x})=\mathbb{P}_{\rs{x}}[\rs{x}=\s{x}]$.  The Shannon entropy of the discrete random variable $\rvec{q}$, in bits, is denoted $H(\rvec{q})$. Likewise, the Kullback–Leibler (KL) divergence  (relative entropy) between $\mathbb{P}_{\rvec{a}}$ and $\mathbb{P}_{\rvec{b}}$, in bits, is denoted $D_{\mathrm{KL}}(\rvec{a}||\rvec{b})$.   We write $\rvec{x}\overset{\mathrm{D}}{=}\rvec{y}$ if $\rvec{x}$ and $\rvec{y}$ are identically distributed. Denote the max singular value of the matrix $X$ via $\lVert X \rVert_{2}$ and the spectral radius $\meig(X)$. We use standard notation for vector norms and information theoretic measures. The set of finite length binary strings is denoted $\{0,1\}^{*}$, and the the length of $a\in \{0,1\}^{*}$, in bits, is $\ell(a)$.

\section{System model and problem formulation}\label{sec:sys}
In this section, we define the system model, and give high level descriptions of DI bitrate lower bounds and the compression architecture from \cite{ourJSAIT}. 

We consider the system model depicted in Figure \ref{fig:ditharch}, which depicts a discrete-time, Gauss-Markov plant that is controlled using feedback that is encoded into variable-length binary strings (or packets). 
Denote the state vector $\rvec{x}_{\s{t}}\in \mathbb{R}^{\s{m}}$, the control input $\rvec{u}_{\s{t}}\in\mathbb{R}^{\s{u}}$, and let $\rvec{w}_{\s{t}}\sim\mathcal{N}({0}_{\s{m}},{W})$ denote processes noise assumed to be IID over time. We assume ${W}\succ{0}_{\s{m}\times \s{m}}$, i.e., the process noise covariance is full rank. We assume assume that the initial state has $\rvec{x}_{0}\sim\mathcal{N}({0},{X}_0)$ for some $X_0\succeq 0$.  For some ${A}\in\mathbb{R}^{\s{m}\times \s{m}}$ and ${B}\in\mathbb{R}^{\s{m} \times \s{u}}$, the plant dynamics for $\s{t} \ge 0$ are $\rvec{x}_{\s{t}+1} = {A}\rvec{x}_{\s{t}}+{B}\rvec{u}_{\s{t}}+\rvec{w}_{\s{t}}$. To ensure finite control cost is attainable, we assume $({A},{B})$ are stabilizable. The plant is fully observable to a sensor/encoder block. At each discrete timestep $\t$, the sensor/encoder makes measurements of the plant, and encodes its measurements into the codeword $\rvec{a}_{\t}\in \{0,1\}^{*}$ via an arbitrary causal encoding policy. The encoder then transmits its codeword  $\rvec{a}_{\t}$ over the feedback channel to a combined decoder/controller. The decoder/controller uses the packets it receives to design the control input, $\rvec{u}_{\t}$, again via an arbitrary causal policy. We assume that the feedback channel is reliable, e.g. that the decoder receives the packet $\rvec{a}_{\t}$ without errors and without delay. We assume that the sensor/encoder and decoder/controller have access to common \textit{dither sequence}. The dither is a sequence of exogenous random vectors $\rvec{\delta}_{\t}\in\mathbb{R}^{\s{m}}$ that are revealed causally to both encoder and decoder. We assume that $[\rvec{\delta}_{\t}]_{\s{i}}\sim\text{Uniform}([-\frac{\Delta}{2},\frac{\Delta}{2}])$ IID over $\s{i}$ and $\t$. The dither can be used to select codewords/control inputs at the respective blocks. In real-world systems, \textit{shared randomness} of this nature can be approximately accomplished via synchronizing pseudorandom number generators at the encoder and decoder.  The encoder/sensor policy in Fig. \ref{fig:ditharch} is a sequence of causally conditioned Borel measurable kernels denoted $ \mathbb{P}_{\mathrm{E}}[\rvec{a}_{0}^{\infty}|| \rvec{\delta}_{0}^{\infty},\rvec{x}_{0}^{\infty}] = \left\{ \mathbb{P}_{\mathrm{E},\s{t}}=\mathbb{P}_{\rvec{a}_{\s{t}}|\rvec{a}_{0}^{\s{t}-1},\rvec{\delta}_{0}^{\s{t}},\rvec{x}_{0}^{\s{t}}}\right\}_{\s{t}}$. The corresponding decoder/controller policy is given by 
    $\mathbb{P}_{\mathrm{C}}[\rvec{u}_{0}^{\infty}|| \rvec{a}_{0}^{\infty},\rvec{\delta}_{0}^{\infty}] = \left\{ \mathbb{P}_{\mathrm{C},\s{t}}=\mathbb{P}_{\rvec{u}_{\s{t}}|\rvec{a}_{0}^{\s{t}},\rvec{\delta}_{0}^{\s{t}},\rvec{u}_{0}^{\s{t}-1}}\right\}_{\s{t}}$.
Conditional independence assumptions between system variables are imposed via factorizations of the system's transition kernels; this is discussed in Fig. \ref{fig:ditharch}.
\begin{figure}
	\centering
	\includegraphics[scale = .17]{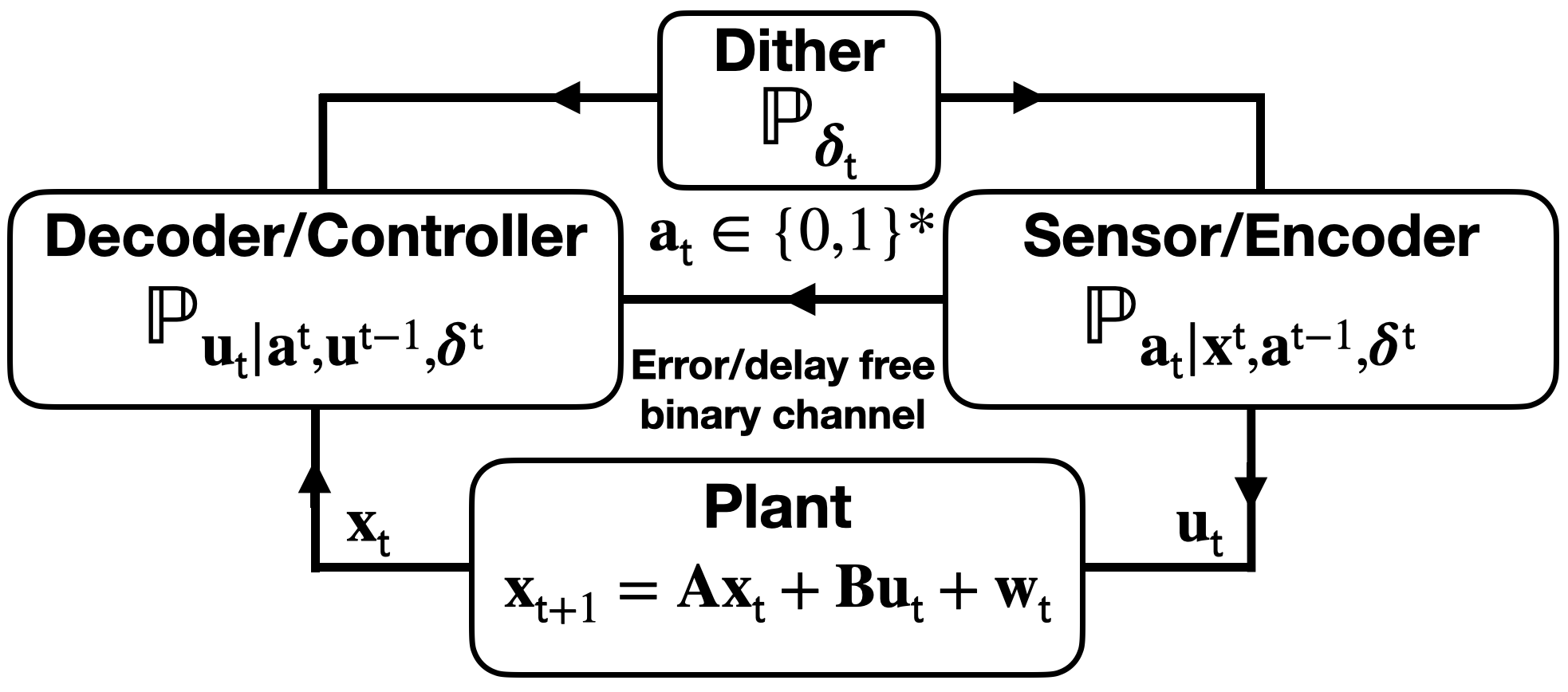}
	\caption{The codeword $\rvec{a}_{\s{t}}$ can be generated randomly conditioned on the information known at the encoder at time $\s{t}$. Likewise, after receiving $\rvec{a}_{\s{t}}$, decoder can randomly generate $\rvec{u}_{\s{t}}$ given $\rvec{a}_{\s{t}}$ and its previous knowledge. The dither $\rvec{\delta}_{\s{t}}$ is generated independently of all past system variables. Under the dynamics, $\rvec{x}_{0}^{\s{t}}$ is a deterministic function of $\rvec{x}_{0}$, $\rvec{a}_{0}^{\s{t}-1}$, $\rvec{u}_{0}^{\s{t}-1}$, and $\rvec{w}_{0}^{\s{t}-1}$. For $\t \ge 0$ we assume the factorizations $\mathbb{P}_{\rvec{a}_{\s{t}+1},\rvec{u}_{\s{t}+1}|\rvec{a}_{0}^{\s{t}},\rvec{\delta}_{0}^{\s{t}},\rvec{u}_{0}^{\s{t}},\rvec{w}_{0}^{\s{t}},\rvec{x}_{0}} = \mathbb{P}_{\mathrm{E},\s{t}+1}\mathbb{P}_{\mathrm{C},\s{t}+1}$  and  $\mathbb{P}_{\rvec{a}_{\s{t}+1},\rvec{\delta}_{\t+1},\rvec{u}_{\s{t}+1},\rvec{w}_{\t+1}|\rvec{a}_{0}^{\s{t}},\rvec{\delta}_{0}^{\s{t}},\rvec{u}_{0}^{\s{t}},\rvec{w}_{0}^{\s{t}},\rvec{x}_{0}} = \mathbb{P}_{\mathrm{E},\s{t}+1}\mathbb{P}_{\mathrm{C},\s{t}+1}\mathbb{P}_{\rvec{\delta}}\mathbb{P}_{\rvec{w}}$, where $\mathbb{P}_{\rvec{w}}$ and $\mathbb{P}_{\rvec{\delta}}$ are the marginal distributions of the process noise and dither.  We assume that initially  $\mathbb{P}_{\rvec{a}_{0},\rvec{\delta}_{0},\rvec{u}_{0},\rvec{w}_{0}|\rvec{x}_{0}}  = \mathbb{P}_{\mathrm{E},0}\mathbb{P}_{\mathrm{C},0}\mathbb{P}_{\rvec{\delta}}\mathbb{P}_{\rvec{w}}$. These factorizations encode assumed conditional independence assumptions between system variables. The lower bounds in \cite{ourConverseLetter} follow as a consequence.
}\label{fig:ditharch}
\vspace{-.5cm}
\end{figure}

We require that the the codewords produced by the encoder be \textit{prefix-free} in the following sense: for all $\t$ and any realizations ($\rvec{a}_{0}^{\s{t}-1}={a}_{0}^{\s{t}-1},\rvec{\delta}_{0}^{\s{t}}={\delta}_{0}^{\s{t}},\rvec{u}_{0}^{\s{t}-1}={u}_{0}^{\s{t}-1}$), for all distinct $a_1,a_2\in\{0,1\}^*$ with $\mathbb{P}_{\rvec{a}_{\s{t}}|\rvec{a}_{0}^{\s{t}-1},\rvec{\delta}_{0}^{\s{t}},\rvec{u}_{0}^{\s{t}-1}}[\rvec{a}_{\s{t}}=a_1|\rvec{a}_{0}^{\s{t}-1}={a}_{0}^{\s{t}-1},\rvec{\delta}_{0}^{\s{t}}={\delta}_{0}^{\s{t}},\rvec{u}_{0}^{\s{t}-1}={u}_{0}^{\s{t}-1}]>0$ and $\mathbb{P}_{\rvec{a}_{\s{t}}|\rvec{a}_{0}^{\s{t}-1},\rvec{\delta}_{0}^{\s{t}},\rvec{u}_{0}^{\s{t}-1}}[\rvec{a}_{\s{t}}=a_2|\rvec{a}_{0}^{\s{t}-1}={a}_{0}^{\s{t}-1},\rvec{\delta}_{0}^{\s{t}}={\delta}_{0}^{\s{t}},\rvec{u}_{0}^{\s{t}-1}={u}_{0}^{\s{t}-1}]>0$, $a_1$ is not a prefix of $a_2$.
This assumption ensures that the decoder can instantaneously decode the codeword at time $\t$. However, it permits the encoder to use different prefix-free codebooks for different realizations of $(\rvec{a}_{0}^{\s{t}-1},\rvec{\delta}_{0}^{\s{t}},\rvec{u}_{0}^{\s{t}-1})$  (cf. \cite[Prefix Constraint 1]{ourJSAIT} \cite[Assumption 2]{ourConverseLetter}). 

 We are interested in the tradeoff between LQG control performance and communication cost, quantified by the time-average expected codeword length. Mathematically, this is described by the optimization over policies $\mathbb{P}_\mathrm{E}, \mathbb{P}_{\mathrm{C}}$ that conform to the prefix constraint:
\begin{align}
\mathcal{L}(\gamma) = \left\{ \begin{aligned}
& \underset{\mathbb{P}_\mathrm{E}, \mathbb{P}_{\mathrm{C}}}{\inf} \text{ }\underset{\s{T}\rightarrow\infty}{\lim\sup} \text{ }\frac{1}{T}\sum\nolimits_{\s{t}=0}^{\s{T}-1}\mathbb{E}[\ell(\rvec{a}_{\s{t}})] \\ &\text{s.t. }  \underset{\s{T}\rightarrow\infty}{\lim\sup}  \frac{1}{\s{T}}\sum\nolimits_{\s{t}=0}^{\s{T}-1}\mathbb{E}\left[\lVert \rvec{x}_{\s{t}+1} \rVert_{{Q}}^{2} +\lVert \rvec{u}_{\s{t}} \rVert_{{\Phi}}^{2}\right] \le \gamma,\nonumber
\end{aligned} \right.
\end{align} where ${Q}\succeq {0}$, ${\Phi}\succ {0}$, and $\gamma$ is the maximum tolerable LQG cost. Let $S$ be a stabilizing solution to the discrete algebraic Riccati equation (DARE) $A^{\mathrm{T}}SA-S-A^{\mathrm{T}}SB(B^{\mathrm{T}}SB+\Phi)^{-1}B^{\mathrm{T}}SA+Q = 0$, $K=-(B^{\mathrm{T}}SB+\Phi)^{-1}B^{\mathrm{T}}SA$, and $\Theta = K^{\mathrm{T}}(B^{\mathrm{T}}SB+\Phi)K$. Consider the optimization 
\begin{align}\label{eq:threestageRDF}
    \mathcal{R}(\gamma) = \left\{ \begin{aligned}
& \underset{\substack{P,\Pi, \in\mathbb{R}^{m\times m}\\P,\Pi\succeq 0} }{\inf} \frac{1}{2}(-\log_{2}{\det{\Pi}}+\log_{2}{\det{W}} )\\ &\text{ }\text{s.t. }  \mathrm{Tr}(\Theta P)+\mathrm{Tr}(WS)\le \gamma\text{,  }\\&\text{ } P\preceq APA^\mathrm{T}+W\text{, } \\&\text{ }\text{ }\text{ }\begin{bmatrix}P-\Pi & PA^{\mathrm{T}} \\ AP & APA^{\mathrm{T}}+W \end{bmatrix}\succeq0 
\end{aligned}\right. 
\end{align} $\mathcal{R}(\gamma)$ is the minimum DI that must be incurred by any policy that achieves an LQG cost less than or equal to $\gamma$. This DI is a fundamental lower bound on the bitrate $\mathcal{L}(\gamma)$;  we have that
$\mathcal{R}(\gamma)\le \mathcal{L}(\gamma)$ \cite[Theorem 1]{ourConverseLetter} \cite{SDP_DI}. The minimizing $P$ from (\ref{eq:threestageRDF}), denoted $\hat{P}$, can be used to construct encoder and decoder policies that nearly achieve the lower bound $\mathcal{R}(\gamma)$\cite{ourJSAIT}. This approach, depicted in Fig. \ref{fig:overviewachiev}, is presently summarized. 
\begin{figure}
	\centering
	\includegraphics[scale = .19]{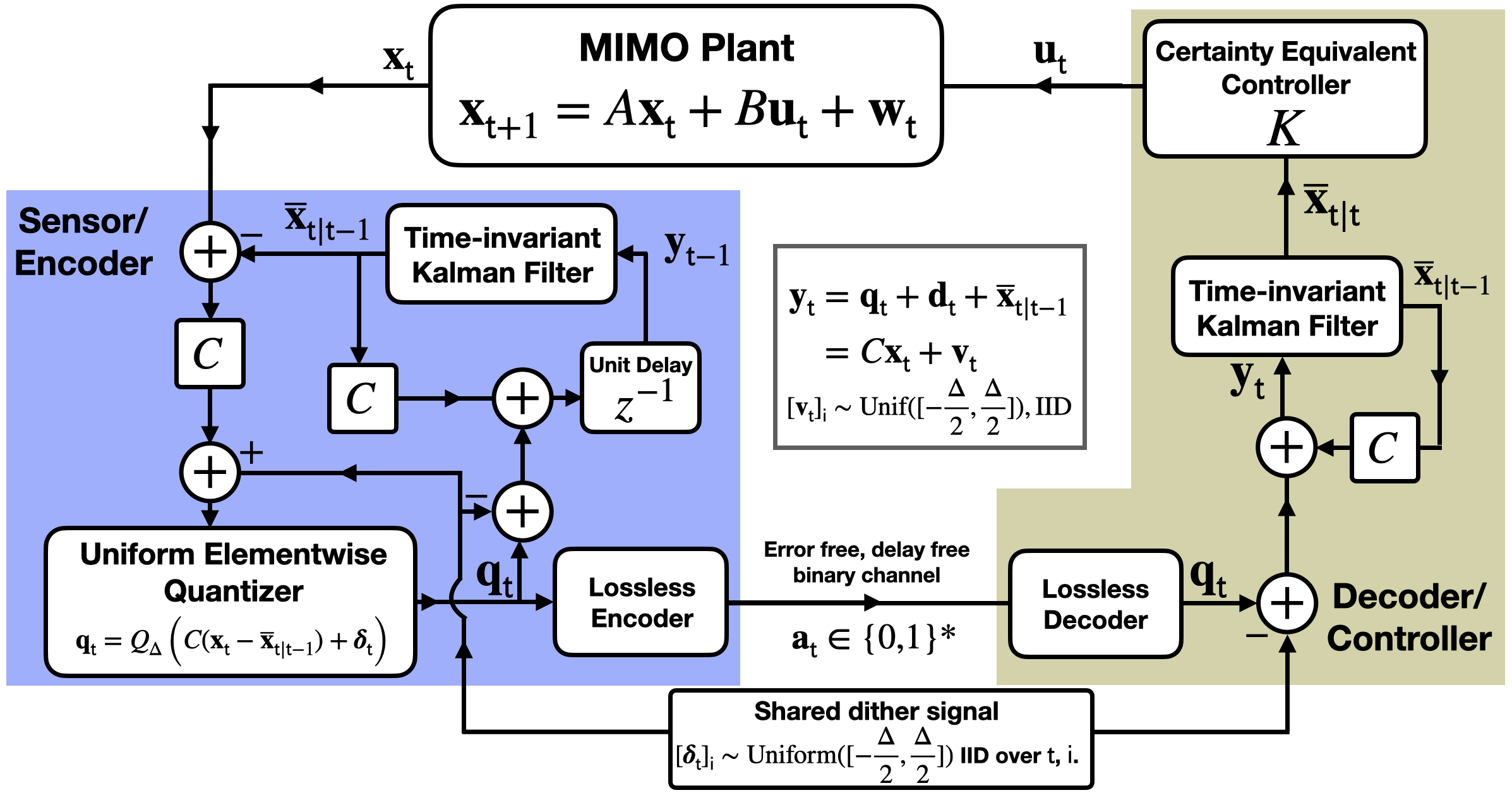}
	\caption{The quantizer and controller design from \cite{ourJSAIT}. }\label{fig:overviewachiev}
\end{figure}

Given $\hat{P}$, define $\hat{P}_{+}=A\hat{P}A^{\mathrm{T}}+W$. Assume that $\Delta =1$, and that $C$ is chosen such that $\frac{1}{12}(\hat{P}^{-1}-\hat{P}_{+}^{-1}) = C^{\tp}C$. The encoder and decoder operate synchronized time-invariant Kalman filters  (KF) that are updated with measurements constructed from the encoded codewords. Let  $\overline{\rvec{x}}_{\t|\t-1}$ denote these filters' a priori estimate, and $\overline{\rvec{x}}_{\t|\t}$ their posterior estimate. Assume initially that $\overline{\rvec{x}}_{1|0} = 0$. By construction, both the encoder and decoder will have computed the same a priori estimate $\overline{\rvec{x}}_{\t|\t-1}$ by the beginning of timestep $\t$. Define the Kalman filter's a priori error via $\rvec{e}_{\t} = \rvec{x}_{\t} - \overline{\rvec{x}}_{\t|\t-1}$. Let $Q_{\Delta}:\mathbb{R}^{\s{m}}\rightarrow\mathbb{Z}^{\s{m}}$ be a function that rounds its input elementwise to the nearest integer, e.g.  $[Q_{\Delta}(x)]_{\s{i}} = k$ if $[x]_{\s{i}}\in [k-\frac{1}{2},k+\frac{1}{2})$. At every time $\t$ the encoder produces a \textit{dithered quantization} of the Kalman innovation via $\rvec{q}_{\t} = Q_{\Delta}(C\rvec{e}_{\t}+\rvec{\delta}_{\t})$. The quantization $\rvec{q}_{\t}$ is a discrete random variable with countably infinite support on $\mathbb{Z}^{\s{m}}$. The quantization $\rvec{q}_{\t}$ is encoded into the codeword $\rvec{a}_{\t}$ losslessly. Since the decoder receives $\rvec{a}_{\t}$ without error, it can reconstruct $\rvec{q}_{\t}$ exactly. The decoder then computes a centered, reconstructed measurement $\rvec{y}_{\t} = \rvec{q}_{\t}- \rvec{\delta}_{\t} + C\overline{\rvec{x}}_{\t|\t-1}$. It is shown in \cite[Eqn. (19)]{ourJSAIT} that $ \rvec{y}_{\t} = C\rvec{x}_{\t} +\rvec{v}_{\t}$, where $[\rvec{v}_{\t}]_{\s{i}}\sim \text{Uniform}([-\frac{1}{2},\frac{1}{2}])$ IID over $\s{i}$ and 
$\rvec{v}_{\t}\indep\rvec{x}_{0}^{\t}$. Since the encoder knows the dither signal, it can also compute $\rvec{y}_{\t}$. Both the encoder and decoder update their KFs using the time-invariant gain $J= \hat{P}_{+}{C}^{\mathrm{T}}({C}\hat{P}_{+}{C}^{\mathrm{T}}+I_{\s{m}})^{-1}$, computing $\overline{\rvec{x}}_{\t|\t} = \rvec{\overline{x}}_{\s{t}|\s{t}-1}+J(\rvec{y}_{\s{t}}-C\rvec{\overline{x}}_{\s{t}|\s{t}-1})$. The decoder then applies certainty equivalent control, selecting $\rvec{u}_{\t} = K\overline{\rvec{x}}_{\t|\t}$. The encoder can likewise compute $\rvec{u}_{\t}$, and both the encoder and decoder KFs compute the predict update via $\overline{\rvec{x}}_{\t+1|\t}=A\overline{\rvec{x}}_{\t|\t}+B\rvec{u}_{\t}$. The next proposition highlights some relevant results from \cite{ourJSAIT} that we use later.
\begin{proposition}\label{prop:prev}
Let $L=AJ$ and $R=A-LC$. Let  $\{\rvec{\nu}_{\s{t}}\}$ denote an IID sequence of random variables uniformly distributed on $[-\s{1}/2,\s{1}/2]^{\s{m}}$, let $\{\rvec{\omega}_{\s{t}}\}$ be IID with $\rvec{\omega}_{\s{t}}\sim\mathcal{N}(0_{\s{m}},W)$, and let $\rvec{\chi}\sim\mathcal{N}(0_{\s{m}},X_{0})$.  Let $\{\rvec{\omega}_{\s{t}}\}$, $\{\rvec{\nu}_{\s{t}}\}$, and $\rvec{\chi}$ be mutually independent.
We have that $R$ is (discrete-time) globally asymptotically stable, e.g. $\meig(R)<1$, that
\begin{align}\label{eq:identicallydistributed}
    \rvec{e}_{\t} \overset{\mathrm{D}}{=} R^{\s{t}}\rvec{\chi}+\sum_{\s{i}=0}^{\s{t}-1}R^{\s{i}}(\rvec{\omega}_{\s{i}}-L\rvec{\nu}_{\s{i}}),
\end{align} and that there exists a random variable $\rvec{e}\sim \mathbb{P}_{\rvec{e}}$ such that 
$(\rvec{e}_{\t},\rvec{\delta}_{\t})$ converge in total variation (and thus weakly) to $(\rvec{e},\rvec{\delta})$ where $\rvec{\delta}\indep \rvec{e}$ and $\rvec{\delta} \sim \text{Uniform}([-\frac{1}{2},\frac{1}{2}]^{\s{m}})$. 
Furthermore $\rvec{e}\in\mathbb{R}^{\s{m}}$ has a strictly positive probability density function and the process  $(\rvec{e}_{\t},\rvec{\delta}_{\t})$ is ergodic in the sense that that if $f:\mathbb{R}^{\s{m}}\times [-1/2,1/2]^{\s{m}}\rightarrow \mathbb{R}$ has 
$\mathbb{E}[|f(\rvec{e},\rvec{\delta})|]<\infty$ then \begin{align}
    \lim_{\s{T}\rightarrow\infty}\frac{1}{\s{T}}\sum_{\s{i}=0}^{\s{T}-1}f(\rvec{e}_{\s{i}},\rvec{\delta}_{\s{i}}) = \mathbb{E}[f(\rvec{e},\rvec{\delta})]\text{, almost surely.}\label{eq:ergodic}
\end{align}
Defining $\rvec{q} = Q_{\Delta}(C\rvec{e}+\rvec{\delta})$, we have that the $\rvec{q}_{\t}$ converge in total variation (and thus weakly) to $\rvec{q}$ and furthermore that $\lim_{\t\rightarrow\infty }D_{\mathrm{KL}}(\rvec{q}_{\t}||\rvec{q}) = 0$.  Finally, we have that both ${\lim\sup}_{\t\rightarrow\infty}H(\rvec{q}_{\t}) \le  R(\gamma) + \s{b}$ and $H(\rvec{q}) \le  R(\gamma) + \s{b}$ where $\s{b} =\s{m}\left(1+\frac{1}{2}\log_{2}\left(\frac{2\pi e}{12}\right)\right)$. 
So long as the $\rvec{q}_{\t}$ are losslessly conveyed to the decoder, LQG control cost satisfies
\begin{align}\label{eq:lqgbound}
    \underset{\s{T}\rightarrow\infty}{\lim\sup}  \frac{1}{\s{T}}\sum\nolimits_{t=0}^{T-1}\mathbb{E}[\lVert \rvec{x}_{t+1} \rVert_{{Q}}^{2} +\lVert \rvec{u}_{t} \rVert_{{\Phi}}^{2}] \le \gamma. 
\end{align}
\end{proposition}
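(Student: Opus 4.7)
The proposition compiles several facts already established in \cite{ourJSAIT}, so my plan is to rebuild them in the natural order driven by the Kalman-filter error dynamics. First I would write down the one-step recursion $\rvec{e}_{\t+1}=R\rvec{e}_{\t}+\rvec{w}_{\t}-L\rvec{v}_{\t}$, obtained by substituting $\rvec{y}_{\t}=C\rvec{x}_{\t}+\rvec{v}_{\t}$ into the synchronized predict/update equations with $L=AJ$. The subtractive-dither (Schuchman) lemma applied to $\rvec{q}_{\t}=Q_{\Delta}(C\rvec{e}_{\t}+\rvec{\delta}_{\t})$ shows that $\rvec{v}_{\t}=\rvec{q}_{\t}-\rvec{\delta}_{\t}-C\rvec{e}_{\t}$ is uniform on $[-1/2,1/2]^{\s{m}}$ and independent of the history $\rvec{x}_{0}^{\t}$, hence IID across time. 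Iterating from $\rvec{e}_{0}=\rvec{x}_{0}$ and relabeling the driving noises gives (\ref{eq:identicallydistributed}). Stability of $R$ is then a consequence of standard Kalman filter theory applied to the surrogate model $\rvec{y}=C\rvec{x}+\rvec{v}$ with noise covariance $\frac{1}{12}I$: $J$ is the steady-state gain at the stabilizing a priori covariance $\hat{P}_{+}$, so $\meig(R)<1$.

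Given $\meig(R)<1$, the partial sums in (\ref{eq:identicallydistributed}) are absolutely summable in $L^{2}$ and converge to $\rvec{e}$, whose density is strictly positive thanks to the Gaussian summands. For TV convergence of $\rvec{e}_{\t}$ I would invoke geometric ergodicity of the linear Markov chain $\rvec{e}_{\t+1}=R\rvec{e}_{\t}+\rvec{w}_{\t}-L\rvec{v}_{\t}$, which follows from Meyn--Tweedie-style arguments for a stable linear recursion with non-degenerate Gaussian innovation; joint convergence with the already-stationary $\rvec{\delta}_{\t}$ is immediate, and the same geometric-drift argument gives ergodicity of the product chain, so (\ref{eq:ergodic}) follows from Birkhoff. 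TV convergence $\rvec{q}_{\t}\to\rvec{q}$ is then free from the data-processing inequality applied to the measurable map $(e,\delta)\mapsto Q_{\Delta}(Ce+\delta)$.

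The step I expect to be hardest is upgrading TV to $D_{\mathrm{KL}}(\rvec{q}_{\t}||\rvec{q})\to 0$: on the countable alphabet $\mathbb{Z}^{\s{m}}$, TV convergence does not imply KL convergence because the log-ratios $\log(\mathbb{P}_{\rvec{q}_{\t}}(\s{q})/\mathbb{P}_{\rvec{q}}(\s{q}))$ can blow up in the tails. I would handle this by establishing uniform-in-$\t$ sub-Gaussian upper bounds on $\mathbb{P}_{\rvec{q}_{\t}}$, inherited from the Gaussian contribution in (\ref{eq:identicallydistributed}), together with matching lower bounds on $\mathbb{P}_{\rvec{q}}$ in the same tail region. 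A truncation argument then splits the KL sum into a finite window, where pointwise convergence and continuity of $p\mapsto p\log(p/q)$ drive the contribution to zero, and a tail, where the contribution vanishes by a dominated-convergence bound.

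Finally, the entropy inequalities follow from the standard upper bound on the entropy of a uniformly dithered quantizer's output in terms of the differential entropy of its input, applied to the Gaussian majorant of $h(C\rvec{e}_{\t})$ at covariance $C\hat{P}_{+}C^{\tp}$, and simplified using $C^{\tp}C=\frac{1}{12}(\hat{P}^{-1}-\hat{P}_{+}^{-1})$ and the matrix-determinant lemma to recover $\mathcal{R}(\gamma)+\s{b}$; the same calculation applied to the stationary $\rvec{e}$ yields the bound on $H(\rvec{q})$. The LQG cost bound (\ref{eq:lqgbound}) is then the textbook separation-principle computation: with surrogate white measurement $\rvec{y}_{\t}$, a matched time-invariant Kalman filter, and certainty-equivalent controller $\rvec{u}_{\t}=K\overline{\rvec{x}}_{\t|\t}$, the steady-state cost decomposes as $\mathrm{Tr}(\Theta\hat{P})+\mathrm{Tr}(WS)$, which is at most $\gamma$ by the constraint in (\ref{eq:threestageRDF}).
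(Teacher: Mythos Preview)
Your proposal is sound and, in substance, reconstructs the arguments that the paper simply cites. The paper's own proof of this proposition is entirely by reference to \cite{ourJSAIT}: each claim (stability of $R$, the distributional identity (\ref{eq:identicallydistributed}), TV and ergodic convergence, KL convergence, the entropy bounds, and the LQG cost bound) is mapped to a specific section, lemma, or theorem of that prior work with no re-derivation here. Your sketch recovers the same results by the same mechanisms that \cite{ourJSAIT} uses---the Schuchman subtractive-dither identity for the innovation, Meyn--Tweedie-style geometric ergodicity for the stable linear recursion driven by nondegenerate noise, a truncation-plus-uniform-tail argument for KL convergence on the countable alphabet, the Zamir--Feder-type entropy bound for dithered quantization combined with the matrix-determinant lemma, and the separation principle for the cost---so the two are aligned at the level of ideas; yours is simply more explicit. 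One point you pass over quickly is that $\meig(R)<1$ is not quite off-the-shelf Kalman theory, since $C$ is \emph{constructed} from the rate-distortion minimizer $\hat{P}$ via $C^{\tp}C=\tfrac{1}{12}(\hat{P}^{-1}-\hat{P}_{+}^{-1})$ rather than given a priori as a detectable measurement matrix; in \cite{ourJSAIT} this is handled by showing that with this choice of $C$ and $J$, the pair $(\hat{P},\hat{P}_{+})$ is a stabilizing fixed point of the associated Riccati recursion, which then forces $A-AJC$ to be Schur. Your plan goes through once that verification is folded in.
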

\begin{proof}
That $\meig(R)<1$ is addressed in \cite[Section IV.C]{ourJSAIT}, (\ref{eq:identicallydistributed}) is addressed in the second paragraph of the proof of \cite[Lemma 4.8]{ourJSAIT}. The existence and ergodic properties of the limiting measures $\mathbb{P}_{\rvec{e},\rvec{\delta}}$ and $\mathbb{P}_{\rvec{q}}$ are main results of \cite{ourJSAIT}. These results are discussed in \cite[Theorem IV.3 and Section IV.C, in particular Lemma IV.5]{ourJSAIT}.  The KL-sense convergence of $\rvec{q}_{\t}$ to $\rvec{q}$ is  \cite[Lemma IV.8]{ourJSAIT}.
The entropy bounds are via \cite[Lemma IV.7]{ourJSAIT} and (\ref{eq:lqgbound}) is \cite[Theorem IV.3 (iii)]{ourJSAIT}.
\end{proof} 

In the encoder/decoder in Fig. \ref{fig:overviewachiev}, at time $\t$ the quantization $\rvec{q}_{\t}$ is encoded into the prefix-free codeword $\rvec{a}_{\t}\in\{0,1\}^{*}$. The encoding is lossless, and upon receiving $\rvec{a}_{\t}$ the decoder can recover $\rvec{q}_{\t}$. The focus of the reminder of this work is on this lossless encoding. Let $C_{\t}:\mathbb{Z}^{\mathrm{m}}\rightarrow \{0,1\}^{*}$ be the encoding function used at time $\t$, e.g. assume $\rvec{a}_{\t}=C_{\t}(\rvec{q}_{\t})$. If the probability mass function  $\mathbb{P}_{\rvec{q}_{\t}}$ is known at every $\t$, $C_{\t}$ can be chosen as a SFE code adapted to the distribution $\mathbb{P}_{\rvec{q}_{\t}}$ and achieve a codeword length $\mathbb{E}[\ell(\rvec{a}_{\t})] \le H(\rvec{q}_{\t})+1$ (cf. e.g. \cite[Sec. IV.A.1]{ourJSAIT}). In \cite{ourJSAIT}, we proved that if the limiting distribution $\mathbb{P}_{\rvec{q}}$ is known and $C_{\t}$ is chosen as a fixed (time-invariant) SFE code adapted to the distribution $\mathbb{P}_{\rvec{q}}$, then 
${\lim\sup}_{\t\rightarrow \infty}\text{ } \mathbb{E}[\ell(\rvec{a}_{\t})] \le {\lim\sup}_{\t\rightarrow \infty} H(\rvec{q}_{\t})+1$. Thus, if \textit{either} the marginal PMFs $\mathbb{P}_{\rvec{q}_{\t}}$ or the limiting PMF  $\mathbb{P}_{\rvec{q}}$ is known, we can losslessly encode $\{\rvec{q}_{\t}\}$ so that the prefix constraint is satisfied and
\begin{IEEEeqnarray}{rCl}
  \underset{\s{T}\rightarrow \infty}  {\lim\sup}\text{ } \sum_{\s{i}=0}^{\s{T}-1}\mathbb{E}[\ell(\rvec{a}_{\s{i}})] &\le& \mathcal{R}(\gamma)+\s{b}+1,\label{eq:codewordlengthbounds}
\end{IEEEeqnarray} e.g. the system achieves a time-average codeword length that is at most $\s{b}$ bits above the fundamental rate-distortion lower bound in (\ref{eq:threestageRDF}). While the bound in (\ref{eq:codewordlengthbounds}) can be achieved if either the $\mathbb{P}_{\rvec{q}_{\t}}$  or   $\mathbb{P}_{\rvec{q}}$ are known, this is difficult to accomplish in practice. It is doubtful that they admit closed form characterizations. This motivates pursuing designs for $C_{\t}$ via universal coding. In our present context universal coding functions will be constructed via past observations of the source process. To make this explicit, we will denote $C_{\t} = C_{\t, \rvec{q}^{\t-1}}$ when $C_{\t}$ is a function of the ``past realizations" $\rvec{q}^{\t-1}$. We discuss universal coding in the next section. 

\section{Towards universal coding for the quantizer outputs}
We begin with some definitions. A probability mass function $\mathbb{P}_{\rs{z}}:\mathbb{N}_{+}\rightarrow [0,1]$ belongs to the \textit{power-law} envelope class with parameters $(\s{\alpha},\s{\beta})$ if for $\s{\alpha} >1$, $\beta > 2^{\alpha}/\left(\sum_{\s{i}=1}^{\infty}\frac{1}{\s{i}^{\s{\alpha}}}\right)$,     $\mathbb{P}_{\rs{z}}(\s{x}) \le \min\left(\frac{\beta
    }{\s{x}^{\alpha}},1\right)$ for $\s{x} \in\mathbb{N}_{+}$\cite{frenchOG}.
 Likewise, a probability mass function $\mathbb{P}_{\rs{z}}:\mathbb{N}_{+}\rightarrow [0,1]$ belongs to the exponential envelope class with parameters $(\s{\alpha},\s{\beta})$ if for  $\s{\alpha}>0$, $\s{\beta}>e^{2\alpha}$ such that  $\mathbb{P}_{\rs{z}}(\s{x}) \le \min(\beta e^{-\alpha \s{x} },1)$, for $\s{x} \in\mathbb{N}_{+}$. Finally, a (scalar) random variable $\rs{x}$ is $\s{\sigma}$-\textit{subgaussian} if $\mathbb{E}[\rs{x}]=0$ and $\mathbb{E}[e^{\s{\lambda} \rs{x}}] \le e^{\s{\lambda}^2\s{\sigma}^2/2}$ for all $\s{\lambda}\in\mathbb{R}$. A random vector  $\rvec{x}\in\mathbb{R}^{\s{m}}$ is $\s{\sigma}$-subgaussian if for every $c\in\mathbb{R}^{\s{m}}$ with $\lVert c\rVert_{2}=1$, $c^{\tp}\rvec{x}$ is $\s{\sigma}$-subgaussian. 

Let $\rs{x}_{\t}$ be a stationary, IID source on $\mathbb{N}_{+}$. Assume $\rs{x}_{\t}\sim \mathbb{P}_{\rs{x}}$ for all $\t$. Let $\rs{x}\sim\mathbb{P}_{\rs{x}}$. Assume that $\mathbb{P}_{\rs{x}}$ falls into either a power-law or exponential envelope class. A consequence of \cite{frenchOG} and \cite{frenchExp} is that one can construct a sequence of functions $C_{\t,\rs{x}^{\t-1}}:\mathbb{N}_{+}\rightarrow \{0,1\}^{*}$ such that 
\begin{align}\label{eq:zerodelay}
    \underset{\s{T}\rightarrow \infty}{\lim\sup}\frac{1}{\s{T}}\sum_{\s{i}=0}^{\s{T}-1}\mathbb{E}\left[\ell\left(C_{\t,\rs{x}^{\t-1}}\left(\rs{x}_{\t}\right)\right)\right] \le H(\rs{x})+1.
\end{align}Further, if $\mathbb{P}_{\rs{x}}[\rvec{x}_{\t}=x_{1}], \mathbb{P}_{\rs{x}}[\rvec{x}_{\t}=x_{2}]>0$ then $C_{\t}(x_{1})$ is not a prefix of $C_{\t}(x_{2})$ and vice-versa. This follows from replacing arithmetic coding with SFE coding (according to the same probability model) in the power-law (cf. \cite{frenchOG}) or exponential algorithms (cf. \cite{frenchExp}). The sequence $C_{\t}$ is constructed online  without explicit knowledge of the PMF $\mathbb{P}_{\rs{x}}$.  

The envelope classes, and adaptive encoding schemes in \cite{frenchOG,frenchExp,frenchSeq,frenchSeq2} apply to sources on $\mathbb{N}_{+}$, however, the quantizer output discussed in Section \ref{sec:sys} has $\rvec{q}_{\t}\in\mathbb{Z}^{\s{m}}$. We will develop a bijection $g:\mathbb{Z}^{\s{m}}\rightarrow\mathbb{N}_{+}$ 
such that the limiting distribution of $g(\rvec{q}_{\t})$ falls into
in an envelope class. In particular, assume that $g$ is such that
if $\lVert a \rVert_{\infty} > \lVert b \rVert_{\infty}$  then $g(a) > g(b)$ In other words, assume  $g$ is a function that respects the partial order induced by the infinity norm on $\mathbb{Z}^{\s{m}}$. Clearly such a mapping exists. For $\s{i}\in\mathbb{N}$, define $ \mathcal{B}_{\s{i}} = \{ z \in \mathbb{Z}^{\s{m}} : \lVert z \rVert_{\infty} \le \s{i} \},$ which is the set of points in  $\mathbb{Z}^{\s{m}}$ that lie within an origin-centered hypercube with edges of length $2\s{i}$. Note that the cardinality of $\mathcal{B}_{\s{i}}$ is  $|\mathcal{B}_{\s{i}}| = (2\s{i}+1)^{\s{m}}$. To define $g$, choose $g(0_{\s{m}}) = 1$, and then map the remaining points in  $\mathcal{B}_{1}$ arbitrarily to $2$ to $3^{\s{m}}$ and so on. We have the following theorem. 
\begin{theorem}\label{thm:pwrlaw}Let $g:\mathbb{Z}^{\s{m}}\rightarrow\mathbb{N}_{+}$ be any bijection such that
if $\lVert a \rVert_{\infty} > \lVert b \rVert_{\infty}$  then $g(a) > g(b)$. Let $\rvec{q}$ be as described in Proposition \ref{prop:prev}, and define $\overline{\rvec{q}} = g(\rvec{q})$. If $\s{m}=1$ or $\s{m}=2$, $\overline{\rvec{q}}$ falls into the exponential envelope class. If  $\s{m}>2$,  have that $\overline{\rvec{q}}$ is a member of a power-law envelope class. 
\end{theorem}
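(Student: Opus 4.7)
The strategy is to (i) prove that the limiting error $\rvec{e}$ is subgaussian, (ii) convert this into a tail bound on $\lVert\rvec{q}\rVert_{\infty}$, (iii) use the order-preserving property of $g$ to translate the shell tail bound into a per-point bound on $\mathbb{P}_{\overline{\rvec{q}}}(k)$, and (iv) fit constants to match the two envelope class definitions.

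For step (i), Proposition~\ref{prop:prev} provides the representation $\rvec{e}_{\s{t}}\overset{\mathrm{D}}{=}R^{\s{t}}\rvec{\chi}+\sum_{\s{i}=0}^{\s{t}-1}R^{\s{i}}(\rvec{\omega}_{\s{i}}-L\rvec{\nu}_{\s{i}})$ together with $\meig(R)<1$. Thus $\lVert R^{\s{i}}\rVert_{2}\le C_{0}\rho^{\s{i}}$ for some $\rho<1$. Each summand is independent, mean-zero, and subgaussian (Gaussian plus bounded), with squared subgaussian parameter bounded by a constant times $\rho^{2\s{i}}$; the series $\sum_{\s{i}=0}^{\infty}R^{\s{i}}(\rvec{\omega}_{\s{i}}-L\rvec{\nu}_{\s{i}})$ therefore converges almost surely, and by uniqueness of the weak limit established in Proposition~\ref{prop:prev} its law is $\mathbb{P}_{\rvec{e}}$. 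The partial sums are uniformly $\s{\sigma}$-subgaussian with $\s{\sigma}^{2}\le K\sum_{\s{i}=0}^{\infty}\rho^{2\s{i}}<\infty$; a Fatou argument on the MGF along the a.s.\ convergent partial sums transfers subgaussianity to $\rvec{e}$. Since $\rvec{\delta}$ is bounded and independent of $\rvec{e}$, each coordinate of $C\rvec{e}+\rvec{\delta}$ is also subgaussian. Using $\lVert\rvec{q}\rVert_{\infty}\le\lVert C\rvec{e}+\rvec{\delta}\rVert_{\infty}+\tfrac{1}{2}$ and a union bound over coordinates,
\begin{align}
\mathbb{P}[\lVert\rvec{q}\rVert_{\infty}\ge \s{s}]\le 2\s{m}\exp\!\left(-\s{c}(\s{s}-\tfrac{1}{2})^{2}\right),\quad \s{s}\ge 1,\nonumber
\end{align}
for some $\s{c}>0$.

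For step (iii), the order-preserving property of $g$ forces $g^{-1}(k)\in\mathcal{B}_{\s{i}}\setminus\mathcal{B}_{\s{i}-1}$ whenever $(2\s{i}-1)^{\s{m}}<k\le(2\s{i}+1)^{\s{m}}$, which implies $\lVert g^{-1}(k)\rVert_{\infty}\ge(k^{1/\s{m}}-1)/2$ for $k\ge 2$. Consequently
\begin{align}
\mathbb{P}_{\overline{\rvec{q}}}(k)=\mathbb{P}[\rvec{q}=g^{-1}(k)]\le\mathbb{P}[\lVert\rvec{q}\rVert_{\infty}\ge\lVert g^{-1}(k)\rVert_{\infty}]\le 2\s{m}\exp(-\s{c}'k^{2/\s{m}})\nonumber
\end{align}
for some $\s{c}'>0$ and all sufficiently large $k$. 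For step (iv), when $\s{m}\in\{1,2\}$ the exponent $k^{2/\s{m}}$ dominates $k$, so any small $\s{\alpha}>0$ satisfies $2\s{m}\exp(-\s{c}'k^{2/\s{m}})\le e^{-\s{\alpha}k}$ eventually; one then picks $\s{\beta}$ large enough to enforce both $\s{\beta}>e^{2\s{\alpha}}$ and dominance at the finitely many small-$k$ values (where the $\min(\cdot,1)$ clause absorbs any slack), placing $\overline{\rvec{q}}$ in the exponential envelope class. When $\s{m}>2$, $\exp(-\s{c}'k^{2/\s{m}})=o(k^{-\s{\alpha}})$ for every $\s{\alpha}>1$, so any choice of $\s{\alpha}>1$ with $\s{\beta}$ large enough to both exceed $2^{\s{\alpha}}/\sum_{\s{i}=1}^{\infty}\s{i}^{-\s{\alpha}}$ and dominate the small-$k$ values places $\overline{\rvec{q}}$ in the power-law envelope class.

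The main obstacle is step (i): rigorously propagating subgaussianity through the total-variation (equivalently, weak) limit of $\rvec{e}_{\s{t}}$. This is handled cleanly by identifying $\rvec{e}$ with the almost surely convergent infinite series and invoking Fatou on the MGF of the partial sums, whose subgaussian parameter is uniformly bounded thanks to $\meig(R)<1$. The remaining shell-counting and envelope-constant bookkeeping are routine.
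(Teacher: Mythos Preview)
Your proposal is correct and follows essentially the same line as the paper: establish subgaussianity of $C\rvec{e}+\rvec{\delta}$ from the series representation in Proposition~\ref{prop:prev} together with $\meig(R)<1$, convert to a tail bound on $\lVert\rvec{q}\rVert_{\infty}$, translate this via the shell structure of $g$ into a bound of the form $\mathbb{P}_{\overline{\rvec{q}}}(k)\lesssim \exp(-\s{c}'k^{2/\s{m}})$, and then fit the envelope parameters. The only technical difference is where the limit is taken: the paper proves a uniform-in-$\t$ subgaussian bound for $C\rvec{e}_{\t}+\rvec{\delta}_{\t}$ (splitting into a bounded piece and a Gaussian piece) and then invokes the total-variation convergence $\rvec{q}_{\t}\to\rvec{q}$ at the very end to transfer the PMF bound, whereas you identify $\rvec{e}$ directly with the almost surely convergent infinite series and pass subgaussianity through the limit via Fatou on the moment generating function; both are clean and equivalent.
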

\begin{proof}  Let  $\{\rvec{\omega}_{\s{t}}\}$, $\{\rvec{\nu}_{\s{t}}\}$, and $\rvec{\chi}$ be as in Prop, \ref{prop:prev}, and let $\rvec{\delta}_{\t}\indep (\rvec{\chi},\{\rvec{\omega}_{\s{t}}\}, \{\rvec{\nu}_{\s{t}}\})$. Define $\rvec{z}_{\t}=C\rvec{e}_{\s{t}}+\rvec{\delta}_{\t}$. 
Via 
(\ref{eq:identicallydistributed}),  
\begin{align}\label{eq:rememberedef}
\rvec{z}_{\t}  \overset{\mathrm{D}}{=} C\left(R^{\s{t}}\rvec{\chi}+\sum_{\s{i}=0}^{\s{t}-1}R^{\s{i}}(\rvec{\omega}_{\s{i}}-L\rvec{\nu}_{\s{i}})\right)+\rvec{\delta}_{\t}.
\end{align} Take $c\in\mathbb{R}^{\s{m}}$ with $\lVert c \rVert_{2}=1$. Let $\kappa_{1} = \lim_{\t\rightarrow \infty }\sum_{\s{i}=0}^{\s{t}-1}\lVert R^{\s{i}}\rVert_{2}$, $\kappa_{2} = \sup_{\t} \lVert R^{\t}\rVert^{2}_{2}$, and $\kappa_{3} = \lim_{\t\rightarrow \infty }\sum_{\s{i}=0}^{\s{t}-1}\lVert R^{\s{i}}\rVert_{2}^{2}$. Note that since $\meig(R)< 1$, $\s{\kappa}_{1}$,  $\s{\kappa}_{2}$, $\s{\kappa}_{3}$ are finite via Gelfand's theorem (cf. e.g. Proposition A.4 in \cite{ourJSAIT}). Define $\underline{\rvec{z}}_{\t}= -C\sum_{\s{i}=0}^{\s{t}-1}R^{\s{i}}L\rvec{\nu}_{\s{i}}+\rvec{\delta}_{\t}$ and let $\overline{\rvec{z}}_{\t} = C (R^{\t} \rvec{\chi} + \sum_{\s{i}=0}^{\s{t}-1}R^{\s{i}}\rvec{\omega}_{\s{i}})$ so that  $\rvec{z}_{\t}\overset{\mathrm{D}}{=}\underline{\rvec{z}}_{\t}+\overline{\rvec{z}}_{\t}$ with $\mathbb{E}[\overline{\rvec{z}}_{\t}]=\mathbb{E}[\underline{\rvec{z}}_{\t}]=0$. For all $\t$, $c^{\tp}\underline{\rvec{z}}_{\t}$ has bounded support, i.e. 
\begin{align}\label{eq:boundedsupportpart}
    |c^{\tp}\underline{\rvec{z}}_{\t}|\le \left(\s{\kappa}_{1}\lVert C \rVert_{2}\lVert L \rVert_{2}+1\right) \frac{\sqrt{\s{m}}}{2}
\end{align} which follows via the triangle inequality, Cauchy-Schwartz, the submultiplicativity of matrix norms, and since $\lVert\rvec{\nu}_{\t}\rVert_{2}, \lVert\rvec{\delta}_{\t}\rVert_{2} \le \frac{\sqrt{\s{m}}}{2}$. Let $\s{\sigma}_{1} = \left(\s{\kappa}_{1}\lVert C \rVert_{2}\lVert L \rVert_{2}+1\right) \frac{\sqrt{\s{m}}}{2}$. Given (\ref{eq:boundedsupportpart}) we have that for every $\t$ and $c$ with $\lVert c\rVert_{2}=1$, $c^{\tp}\underline{\rvec{z}}_{\t}$ is $\s{\sigma}_{1}$-subgaussian \cite[Ex. 5.6 (b)]{ba}.  

Define   $\Omega_{\t} = C\left(R^{\s{t}}X_{0}(R^{\s{t}})^{\tp} + \sum_{\s{i}=0}^{\s{t}-1}R^{\s{i}}W(R^{\s{i}})^{\tp}\right)C^{\tp}$. We have $c^{\tp}\overline{\rvec{z}}_{\t}\sim\mathcal{N}(0,c^{\tp}\Omega_{\t}c)$. If we denote $\s{\sigma}^{2}_{2} =  \lVert C\rVert_{2}^{2}\left(\lVert X_{0}\rVert_{2}\s{\kappa}_{2}+\lVert W \rVert_{2}\s{\kappa}_{3}\right)$,
for all $\t$ we have $ c^{\tp}\Omega_{\t}c \le \s{\sigma}_{2}^{2}$,  which follows analagously by the triangle inequality, Cauchy-Schwartz, the submultiplicativity of matrix norms. Since $c^{\tp}\overline{\rvec{z}}_{\t}$ is a zero-mean Gaussian with a variance upper bounded by $\sigma^2_2$ we have that $c^{\tp}\overline{\rvec{z}}_{\t}$ is $\s{\sigma}_{2}$-subgaussian. Note that $\s{\sigma}_{2}$ does not depend on $c$, and that this holds for any $c$ with $\lVert c \rVert_{2}=1$.  Since $c^{\tp}\rvec{z}_{\t}\overset{\mathrm{D}}{=}c^{\tp}\underline{\rvec{z}}_{\t}+c^{\tp}\overline{\rvec{z}}_{\t}$, via \cite[Lemma 5.4 (b)]{ba} $c^{\tp}\rvec{z}_{\t}$ is $\sigma = \sqrt{\sigma_{1}^{2}+\sigma_{2}^{2}}$-subgaussian for all $\t$ and $c$ with $\lVert c\rVert_{2} = 1$. Thus the $\rvec{z}_{\t}$ is $\sigma$-subgaussian for all $\t$. 

Define $\rvec{q}_{t} = Q_{\Delta}(\rvec{z}_{t})$ and $\overline{\rvec{q}}_{\t} = g(\rvec{q}_{\t})$. Note that for $\s{r}\in\mathbb{N}_{0}$ by definition the set $\mathcal{B}_{\s{r}}$ contained exactly $|\mathcal{B}_{\s{r}}| = (2\s{r}+1)^{\s{m}}$ points. By definition of the bijection $g$,  for $\s{r}\in\mathbb{N}_{+}$, $\lVert \rvec{q}_{\t} \rVert_{\infty}\ge \s{r}$ if and only if $\overline{\rs{q}}_{\t} \ge (2(\s{r}-1)+1)^{\s{m}}+1$. 
Take $\s{z} \in \mathbb{N}_{+}$. We have 
\begin{IEEEeqnarray}{rCl}
    \mathbb{P}[\overline{\rs{q}}_{\t}=\s{z}] &\le& \mathbb{P}\left[ \overline{\rs{q}}_{t}\ge \s{z}\right] \label{eq:veryconservative}\\ &\le &  \mathbb{P}\left[\overline{\rs{q}}_{t}\ge \left(2(\lfloor \frac{\sqrt[\s{m}]{\s{z}-1}+1}{2}\rfloor -1)+1\right )^{\s{m}}+1\right] \nonumber  \\  &\le & \mathbb{P}\left[ \lVert \rvec{q}_{\t} \rVert_{\infty}\ge \lfloor \frac{\sqrt[\s{m}]{\s{z}-1}+1}{2}\rfloor\right]\label{eq:explainplease},%\\ %&\le& \mathbb{P}[  \lVert \rvec{q}_{\t} \rVert_{\infty}\ge   \frac{\sqrt[\s{m}]{\s{z}-1}-1}{2} ]
\end{IEEEeqnarray}  where the $\s{z}=1$ case in (\ref{eq:explainplease}) holds trivially and for $\s{z}>1$ (\ref{eq:explainplease}) follows from our observation above (\ref{eq:veryconservative}).   Note that for $\s{r}\in\mathbb{N}_{0}$, $\lVert \rvec{q}_{\t} \rVert_{\infty}\ge \s{r} $ if and only if $\lVert C\rvec{e}_{\t}+\rvec{\delta}_{\t} \rVert_{\infty} \ge \s{r}-\frac{1}{2}$. Thus, further relaxing (\ref{eq:explainplease})
\begin{IEEEeqnarray}{rCl}
    \mathbb{P}[\overline{\rs{q}}_{\t}=\s{z}] &\le& \mathbb{P}\left[  \lVert C\rvec{e}_{\t}+\rvec{\delta}_{\t} \rVert_{\infty}\ge    \lfloor \frac{\sqrt[\s{m}]{\s{z}-1}+1}{2}\rfloor-\frac{1}{2} \right]\nonumber\\ &\le& \mathbb{P}\left[  \lVert C\rvec{e}_{\t}+\rvec{\delta}_{\t} \rVert_{\infty}\ge      \left(\frac{\sqrt[\s{m}]{\s{z}-1}}{2}-1\right)\right]\\ &\le& 2\s{m} e^{-\frac{1}{2\sigma^2} \left( \frac{1}{2}\sqrt[\s{m}]{\s{z}-1}-1\right)^{2}} \label{eq:usetoulouse}
\end{IEEEeqnarray} where (\ref{eq:usetoulouse}) follows from a maximal inequality for subgaussian random vectors (cf. \cite[Theorem 2.2.1]{frenchhds}). Let $\zeta = 1/(8\sigma^2)$.  We have 
$\s{z}^{2} \le e^{\zeta \left(\sqrt[\s{m}]{\s{z}-1}-2\right)^{2}}$ for $\s{z}$ sufficiently large, and thus, $\mathbb{P}[\overline{\rs{q}}_{\t}=\s{z}] \le \frac{2\s{m}}{\s{z}^{2}}$  for $\s{z}$ sufficiently large. This demonstrates that irrespective of $\s{m}$, the quantizer output, when wrapped by the function $g$, falls into a power-law envelope class with parameter $\alpha = 2$ and a $\beta$ that depends on $\zeta$ and $\s{m}$.  If $\s{m}=1$, once we have immediately that $\overline{\rs{q}}_{\t}$ is in the exponential class with $\alpha = \zeta$ and a $\beta$ that depends on $\zeta$. If $\s{m}=2$, $\overline{\rs{q}}_{\t}$ is in the exponential class with $\alpha = \frac{\zeta}{2}$ and a $\beta$ that depends on $\zeta$. Since the bound in  (\ref{eq:usetoulouse}) holds for all $\t$, it holds for the limiting distribution $\overline{\rs{q}}=g(\rvec{q})$.
\end{proof}

Note that $H(\rvec{q}_{\t}) = H(\overline{\rs{q}}_{\t})$ and  $H(\rvec{q}) = H(\overline{\rs{q}})$ since $g$ is a bijection. Likewise, by Proposition \ref{prop:prev} $D_{\mathrm{KL}}(\overline{\rs{q}}_{\t}||\overline{\rs{q}})\rightarrow 0$. A corollary to this latter fact, and Theorem \ref{thm:pwrlaw}'s claim that $\overline{\rs{q}}$ falls into a power-law or exponential envelope class is that $\lim_{\t\rightarrow \infty}H(\rvec{q}_{\t}) = H(\rvec{q})$ \cite[Theorem 21]{infotops}. The results in \cite{frenchOG} and \cite{frenchExp} hold only for \textit{stationary sources} on $\mathbb{N}_{+}$ meanwhile $\overline{\rs{q}}_{\t}$ is only asymptotically stationary. However, given Proposition \ref{prop:prev} and Theorem \ref{thm:pwrlaw} we conjecture the following. 

 \begin{conj}\label{conj:possiblyembarasing}
  Let $g$ be a bijection as described in Theorem \ref{thm:pwrlaw}. Assume that the lossless encoder in Fig. \ref{fig:overviewachiev} first computes  $\overline{\rs{q}}_{\t} = g(\rvec{q}_{\t})$, and then encodes $\overline{\rs{q}}_{\t}$ via $\rvec{a}_{\t}=C_{\t,\overline{\rs{q}}^{\t-1}}(\overline{\rs{q}}_{\t})$, where $C_{\t,\overline{\rs{q}}^{\t-1}}$ is the encoding function constructed via either \cite{frenchExp} (in the case that $\s{m}\le 2$) or \cite{frenchOG} with the zero-delay modification (replacing arithmetic coding with SFE coding, cf. (\ref{eq:zerodelay})). The encoding is prefix-free in the sense of Section \ref{sec:sys} (indeed, particular prefix-free encoding used at time $\t$ depends only on $\overline{\rs{q}}^{\t-1}$), and the decoder can reconstruct $\rvec{q}_{\t}$ exactly, thus ensuring that the constraint on LQG control performance is satisfied. We conjecture that the time-average expected codeword lengths will satisfy $\lim_{\s{T}\rightarrow \infty}\frac{1}{\s{T}}\sum_{\s{i}=0}^{\s{T}-1}\mathbb{E}[\ell(\rvec{a}_{\t})] = H(\rvec{q})+2$.
 \end{conj}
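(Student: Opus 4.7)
The plan is to decompose the expected codeword length at time $\t$ into three pieces and argue about each separately in a \Cesaro-average sense: an ``ideal'' self-information term under the limiting distribution $\mathbb{P}_{\overline{\rs{q}}}$, a universal-coding redundancy term, and a bounded encoding overhead (SFE plus Elias). I aim to show that these three \Cesaro averages converge to $H(\rvec{q})$, $0$, and $2$ respectively. Concretely, I would write $\mathbb{E}[\ell(\rvec{a}_{\t})] = \mathbb{E}[-\log_{2}\hat{\mathbb{P}}_{\t}(\overline{\rs{q}}_{\t})] + \s{r}_{\t}$, where $\hat{\mathbb{P}}_{\t}$ is the (random, empirically learned) probability model that the SFE-modified censoring encoder assigns to the non-censored portion of the alphabet at time $\t$, and $\s{r}_{\t}$ collects the $\leq 1$-bit SFE overhead plus the Elias overhead that is paid whenever $\overline{\rs{q}}_{\t}$ exceeds the current cutoff. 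By Theorem \ref{thm:pwrlaw} (and the fact that its subgaussian bound on $c^{\tp}\rvec{z}_{\t}$ holds uniformly in $\t$) the same envelope controlling $\mathbb{P}_{\overline{\rs{q}}}$ also controls the tails of $\mathbb{P}_{\overline{\rs{q}}_{\t}}$ uniformly in $\t$; the standard Elias-overhead analysis from \cite{frenchOG,frenchExp} then yields $\limsup_{\s{T}\to\infty}\frac{1}{\s{T}}\sum_{\s{i}=0}^{\s{T}-1}\mathbb{E}[\s{r}_{\s{i}}] \le 2$, which is the source of the $+2$ in the statement.

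Next I would split the self-information term as
\[
\mathbb{E}[-\log_{2}\hat{\mathbb{P}}_{\t}(\overline{\rs{q}}_{\t})] = \mathbb{E}[-\log_{2}\mathbb{P}_{\overline{\rs{q}}}(\overline{\rs{q}}_{\t})] + \mathbb{E}\!\left[\log_{2}\tfrac{\mathbb{P}_{\overline{\rs{q}}}(\overline{\rs{q}}_{\t})}{\hat{\mathbb{P}}_{\t}(\overline{\rs{q}}_{\t})}\right].
\]
For the first summand, the KL-sense convergence $D_{\mathrm{KL}}(\overline{\rs{q}}_{\t}\|\overline{\rs{q}})\to 0$ from Proposition \ref{prop:prev}, combined with the uniform envelope tail bound (exactly the ingredient used in the remark after Theorem \ref{thm:pwrlaw} to conclude $H(\rvec{q}_{\t})\to H(\rvec{q})$), yields $\mathbb{E}[-\log_{2}\mathbb{P}_{\overline{\rs{q}}}(\overline{\rs{q}}_{\t})]\to H(\overline{\rs{q}}) = H(\rvec{q})$, whose \Cesaro average is just $H(\rvec{q})$. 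For the second summand (the universal redundancy), the key resource is the ergodicity of $(\rvec{e}_{\t},\rvec{\delta}_{\t})$ from Proposition \ref{prop:prev}, which transfers to $\overline{\rs{q}}_{\t} = g(Q_{\Delta}(C\rvec{e}_{\t}+\rvec{\delta}_{\t}))$ since $g\circ Q_{\Delta}\circ C$ is deterministic. The empirical counts that drive $\hat{\mathbb{P}}_{\t}$ in \cite{frenchOG,frenchExp} therefore converge almost surely to $\mathbb{P}_{\overline{\rs{q}}}$ on the non-censored portion of the alphabet, paralleling the IID analysis.

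To formally port the stationary-IID redundancy bounds of \cite{frenchOG,frenchExp} to the asymptotically stationary setting here, I would attempt a coupling argument: using the total-variation convergence of $(\rvec{e}_{\t},\rvec{\delta}_{\t})$ established in Proposition \ref{prop:prev}, construct on an enlarged probability space a stationary ergodic process $\{\overline{\rs{q}}^{\star}_{\t}\}$ with marginal $\mathbb{P}_{\overline{\rs{q}}}$ such that $\overline{\rs{q}}^{\star}_{\t} = \overline{\rs{q}}_{\t}$ for every $\t$ past an almost-surely finite coupling time $\s{T}^{\star}$. Running the SFE-modified censoring encoder on both processes, the resulting codeword lengths differ only on the finite random prefix $\t < \s{T}^{\star}$, which is washed out by the \Cesaro average, so the stationary redundancy bound transfers to $\{\overline{\rs{q}}_{\t}\}$ and yields vanishing \Cesaro average for the second summand.

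The main obstacle, and the reason the statement is left as a conjecture, is that the redundancy analyses in \cite{frenchOG,frenchExp} rely on IID-specific tools: they track the growth of the censoring cutoff and the concentration of the empirical model using IID tail and concentration bounds. Under mere ergodicity (without a quantitative mixing rate) the same concentration is not immediate, and a naive coupling does not bound the coupling time $\s{T}^{\star}$ well enough to control the relevant expectations (as opposed to almost-sure statements). A complete proof would likely require either establishing a quantitative geometric-ergodicity/mixing rate for $\{\overline{\rs{q}}_{\t}\}$ (plausible in view of the exponential contraction $\meig(R)<1$ and the Gaussian-plus-uniform innovation structure), or redoing the redundancy analysis of \cite{frenchOG,frenchExp} directly for ergodic envelope-class sources; either route is a substantive undertaking that falls outside the scope of the present work.
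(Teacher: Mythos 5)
There is no paper proof to compare against: the statement you were given is explicitly labeled a \emph{conjecture}, and immediately after stating it the authors write that proving or disproving it is a topic for future work. The paper offers only a one-sentence heuristic --- that the result is plausible given the ergodicity of $\{\rvec{q}_{\t}\}$ and the KL-sense convergence $\rvec{q}_{\t}\rightarrow\rvec{q}$ --- which are exactly the two resources your sketch mobilizes. Your candid closing paragraph, which identifies that the censoring-code redundancy analyses in the cited envelope-class references are built on IID tail and concentration bounds that do not port under mere ergodicity without a quantitative mixing rate, is precisely the reason the paper leaves this as a conjecture rather than a theorem. In that sense your proposal is not a proof, and you correctly do not present it as one; it is a sensible and well-aimed sketch that is consistent with the paper's stated intuition and correctly locates the missing ingredient.

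Two further points worth flagging if you intend to pursue this. First, the conjecture asserts an \emph{equality}, $\lim_{\s{T}\rightarrow \infty}\frac{1}{\s{T}}\sum_{\s{i}=0}^{\s{T}-1}\mathbb{E}[\ell(\rvec{a}_{\t})] = H(\rvec{q})+2$, whereas your decomposition (ideal self-information $+$ redundancy $+$ overhead, with the overhead estimated as $\le 2$) only aims at an upper bound $\le H(\rvec{q})+2$; a matching lower bound argument is not sketched at all, and it is not obvious that the average overhead converges to exactly $2$ rather than something strictly less. Second, the total-variation convergence from Proposition \ref{prop:prev} is of the time-$\t$ marginal $(\rvec{e}_{\t},\rvec{\delta}_{\t})$ to its limit, not of the process law, so the exact coupling you invoke (an a.s.\ finite $\s{T}^{\star}$ after which trajectories agree) does not follow from what Proposition \ref{prop:prev} gives; it would require additional structure such as Harris recurrence or a Doeblin-type minorization for the filter-error Markov chain, which is plausible here given the strictly positive limiting density and the uniform dither but is not established. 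Moreover, since the empirical model $\hat{\mathbb{P}}_{\t}$ inside the censoring encoder is a functional of the entire past, the pre-coupling prefix contaminates $\hat{\mathbb{P}}_{\t}$ for all $\t\ge \s{T}^{\star}$; so even with a successful coupling you would need to argue that the two encoders' internal states merge (or that the discrepancy they induce is asymptotically negligible in expectation), which your sketch does not address. Establishing geometric ergodicity for the chain driven by $\rvec{e}_{\t}$ (exploiting $\meig(R)<1$) is the most promising route and would likely serve both of these needs simultaneously.
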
 
 
 While proving (or disproving) Conjecture \ref{conj:possiblyembarasing} is a topic for future work; we suspect that it holds  given the ergodicity of $\{\rvec{q}_{\t}\}$ and since $\rvec{q}_{\t}\rightarrow\rvec{q}$ in the KL-sense. 
 
\section{A practical adaptive algorithm}\label{sec:alg} A key element of the coding schemes in \cite{frenchOG,frenchExp,frenchSeq,frenchSeq2} was the notion of cutoffs. Generally speaking, if the source produced a symbol below the cuttoff at at a given time, the symbol was encoded via arithmetic coding. If the source produced a symbol that exceeded the cutoff, en escape symbol was encoded arithmetically, followed by the symbol encoded using a fixed (Elias) universal code. In \cite{frenchOG,frenchExp,frenchSeq,frenchSeq2}, the cuttoffs grow with time, and (in our case) will diverge as large numbers of symbols are encoded. Cutoffs that grow with time are impractical as they require that the arithmetic precision used to implement arithmetic, or, in the zero-delay case, SFE, encoders and decoders similarly expands over time \cite{wittenArithmetic}. Since our interest is in long term, infinite horizon communication cost, we propose instead to fix the cutoffs a priori and account for fixed arithmetic precision. We describe the algorithm in the remainder of this section. 

We first describe our notion of ``censoring" before describing the encoding. Let $\s{p}$ be the precision, in bits, in which (unsigned integer) arithmetic operations are to be performed at the encoder and decoder. We first transform $\rvec{q}_{\t}$ into a source on $\mathbb{N}_{+}^{\s{m}}$ by computing the vector $\rvec{s}_{\t}$ with elements
\begin{align}\label{eq:wrap}
    [\rvec{s}_{\t}]_{\s{i}} = \begin{cases}
         2[\rvec{q}_{\t}]_{\s{i}}\text{, }&[\rvec{q}_{\t}]_{\s{i}}>0 \\ 
          -2[\rvec{q}_{\t}]_{\s{i}}+1\text{, }&[\rvec{q}_{\t}]_{\s{i}}\le 0 \\ 
    \end{cases}.
\end{align} Let ${k}\in\mathbb{N}^{\s{m}}_{+}$ be a vector of \textit{cutoffs}. The ${k}$ are fixed a priori; and are hyperparameters of our encoding algorithm. We require that $ \s{n}=\prod_{\s{j}=1}^{\s{m}}([k]_{\s{j}}+1)$ has $\s{n} < 2^{\frac{\s{p}}{2}}.$ Define the truncation operator $\text{trunk}_{k}:\mathbb{N}_{+}^{\s{m}}\rightarrow\mathbb{N}_{0}^{\s{m}}$  via
\begin{align}
   [\text{trunk}_{k}(s)]_{\s{i}}  = \begin{cases}
         [s]_{\s{i}}\text{, }&  [s]_{\s{i}} \le [k]_{\s{i}} \\ 
          0\text{, }&\text{otherwise}\\ 
    \end{cases}.
\end{align} Define the post-truncation symbol tuple $\overline{\rvec{s}}_{\t} = \text{trunk}_{k}(\rvec{s}_{\t})$, which is a source on an alphabet of cardinality $\s{n}$.  Denote the sequence of symbols that were truncated $\hat{\rvec{s}}_{\s{t}}$, so that $[\hat{\rvec{s}}_{\t}]_{\s{i}}$ is the  $\s{i}^{\mathrm{th}}$ symbol truncated from $\rvec{s}_{\t}$. Note that the dimension of the vector  $\hat{\rvec{s}}_{\s{i}}$  is random.  Define the \textit{linear indexing} function $\lambda:\mathbb{N}_{0}^{\s{m}}\rightarrow \mathbb{N}_{0}$  via $\lambda(s) = \sum_{\s{i}=1}^{\s{m}-1}  ([s]_{\s{i}})\prod_{\s{j}=\s{i}+1}^{\s{m}}([k]_{\s{j}}+1)+[s]_{\s{m}}$. We have that $\lambda$ is a bijection from the range of $\rvec{s}_{\t}$  to the set $\{0, 1, \dots, \s{n}-1\}$. Let  $\rs{m}_{\s{t}} =   \lambda(\overline{\rvec{s}}_{\t})$. In our algorithm, we encode $\rs{m}_{\s{t}}$ using a SFE code (cf. \cite{coverandthomas}, \cite[Section IV.A.1]{ourJSAIT}). Subsequently, we encode each element of $\hat{\rvec{s}}_{\s{t}}$ with the Elias omega code \cite{eliasUniversal}. The decoder decodes $\rs{m}_{\s{t}}$, and recovers $\overline{\rvec{s}}_{\t}$ via inverting the linear indexing function $\lambda$. The decoder counts the overflow ``$0$" symbols in $\overline{\rvec{s}}_{\t}$, and decodes the omega-encoded $\hat{\rvec{s}}_{\s{t}}$ from the remaining bits. The decoder then reconstructs $\rvec{s}_{\t}$ and $\rvec{q}_{\t}$. 

We implement the SFE encoding along the same lines as the fixed-precision implementation of adaptive arithmetic coding in
\cite{wittenArithmetic}. To encode $\rs{m}_{\s{t}}$  via an SFE codec, a probability mass function for $\rs{m}_{\s{t}}$ is required. We encode $\rs{m}_{\s{t}}$ using an empirical model based on $\rs{m}^{\s{t}-1}$. Let $\rvec{c}_{\t}\in\mathbb{N}_{0}^{\s{n}}$, and assume $[\rvec{c}_{-1}]_{\s{i}}=1$ for $\s{i} \in \{0, \dots, \s{n}-1\}$. For all $\t$, let $\rs{r}_{\t}=\sum_{\s{i}=0}^{\s{n}}  [\rvec{c}_{\t}]_{\s{i}}$. The PMF used for SFE encoding at time $\s{t}+1$ is based on the empirical frequencies
\begin{align}\nonumber
    [\rvec{c}_{\t}]_{\s{i}} = \begin{cases}
        \frac{1}{2}\left([\rvec{c}_{\t-1}]_{\s{i}}-1\right)+1+\mathbb{1}_{\rs{m}_{\s{t}}=\s{i}},\text{ if }\rs{r}_{\t-1} =  2^{\s{p}/2}-1\\ 
  [\rvec{c}_{\t-1}]_{\s{i}}+\mathbb{1}_{\rs{m}_{\s{t}}=\s{i}},\text{ otherwise}.
    \end{cases}
\end{align} We  encode $\rvec{m}_{\t}$ using the PMF $\mathbb{P}_{\rvec{c}_{\t-1}}(\s{i}) = [\rvec{c}_{\s{t}-1}]_{\s{i}}$. Both the encoder and decoder begin with the same initial model $\rvec{c}_{-1}$, and the encoder and decoder updates their models via  after encoding/decoding so that they remain synchronized. The update rule periodically re-scales to ensures that the arithmetic operations required for SFE encoding/decoding can be carried out in $\s{p}$ bits of precision. For more details, see \cite[Chapter 11]{cuvelierDissertation}.
Since the model used for SFE encoding at time $\t$ depends only on $\rs{m}^{\s{t}-1}$, the SFE coding is prefix-free in the sense of Section \ref{sec:sys} (given $\rs{m}^{\s{t}-1}$). The Elias omega coding is likewise instantaneously decodable, and thus the jointly encoding $\rvec{q}_{\t}$ via SFE encoding   $\overline{\rvec{s}}_{\t}$ and $\overline{\rvec{s}}_{\t}$ satisfies our desired prefix constraint. We now characterize this algorithm's performance numerically.  
\section{Numerical results}
We perform our experiments using a linearized model for the inverted pendulum system from \cite{ssmodelinvertedpendulum}. We describe the inverted pendulum in detail in appendix Appendix \ref{apx:dynamics}. The system's state vector consists of $\s{m}=4$ dimensions including a horizontal position and velocity and the pendulum's azimuthal angle, and angular velocity,  with respect to the normal from the cart's platform. The system's control input is $\s{u}=1$ dimensional.  We discretize the system's dynamics at a sampling frequency of 100 samples per second and assuming that the continuous-time control input is via sample-and-hold at the same frequency. The discrete-time system dynamics are assumed to be
$\rvec{x}_{\t+1} = A_{\tau} \rvec{x}_{\t} + B_{\tau }\rvec{u}_{\t}+ \rvec{{w}}_{t}$ where  $A_{\tau}\in\mathbb{R}^{4\times 4}$, $B_{\tau}\in\mathbb{R}^{4\times 1}$, and
$\rvec{w}_{\t}$ is IID process noise such that $\rvec{w}_{\t}\indep \rvec{u}_{0}^{\t-1}, \rvec{x}_{0}^{\t}, \rvec{w}_{0}^{\t-1}$ and $\rvec{w}_{\t}\sim \mathcal{N}(0,W_{\tau})$.  We assume a diagonal  $W_{\tau}$ such that $W_{\tau} =  .005I_{4}$, and an uncertain initial configuration such that $\rvec{x}_{0} \sim \mathcal{N}(0,.05 I_{4})$. We will assume that the LQG cost weights are given by $Q = I_{4\times 4}$ and $\Phi = 1$. 

All simulations were performed using MATLAB R2022B \cite{matlab}. We implemented the quantizer design from Section \ref{sec:sys} and encoded the quantizations using the compression algorithm of Section \ref{sec:alg}. To obtain the solution to the rate-distortion optimization in (\ref{eq:threestageRDF}), and to obtain the measurement matrix $C$, we used the YALMIP toolbox \cite{yalmip} with the MOSEK solver\cite{mosek}. We used the ``sorted" version of SFE coding in our implementation of the lossless compresssion algorithm; this tends to reduce redundancy (cf. \cite[Section IV.A.1]{ourJSAIT}). The SFE coding was implemented using unsigned 64-bit arithmetic, and was based on the arithmetic coding implementation in \cite{wittenArithmetic}. After some initial tuning, settled on a cutoff vector of $k = [8191,3,3,3]^{\tp}$. More details can be found in \cite[Chapter 11]{cuvelierDissertation}. 

 In Figure \ref{fig:empirical} we plot the empirical attained cost for quantizer designs obtained by varying the target LQG cost $\s{\gamma}$. We also plot the lower bound, $\mathcal{R}(\gamma)$, from (\ref{eq:threestageRDF}) and the upper bound from \cite[Theorem IV.3 (ii) (22)]{ourJSAIT}, which is approximately $\mathcal{R}(\gamma)+1.26\s{m}+1$ bits. The behavior of the rate-distortion lower bound is typical; the horizontal asymptote of the lower bound corresponds to the minimum feasible LQG performance with fully observable state feedback (at this sample rate). The lower bound on data rate rises sharply near this asymptotic and becomes quite modest as a higher cost can be tolerated.  We simulated the system for $\s{T}=400000$ samples, and plotted the average control cost versus the average codeword length over the entire horizon. We found that the control cost is rather slow to settle. The target control costs for most, but not all, of the points displayed are below the computed empirical average cost. Figure \ref{fig:joint} illustrates the convergence of the running average bitrate and control cost. 
\begin{figure}
	\centering
	\includegraphics[width = .9\columnwidth]{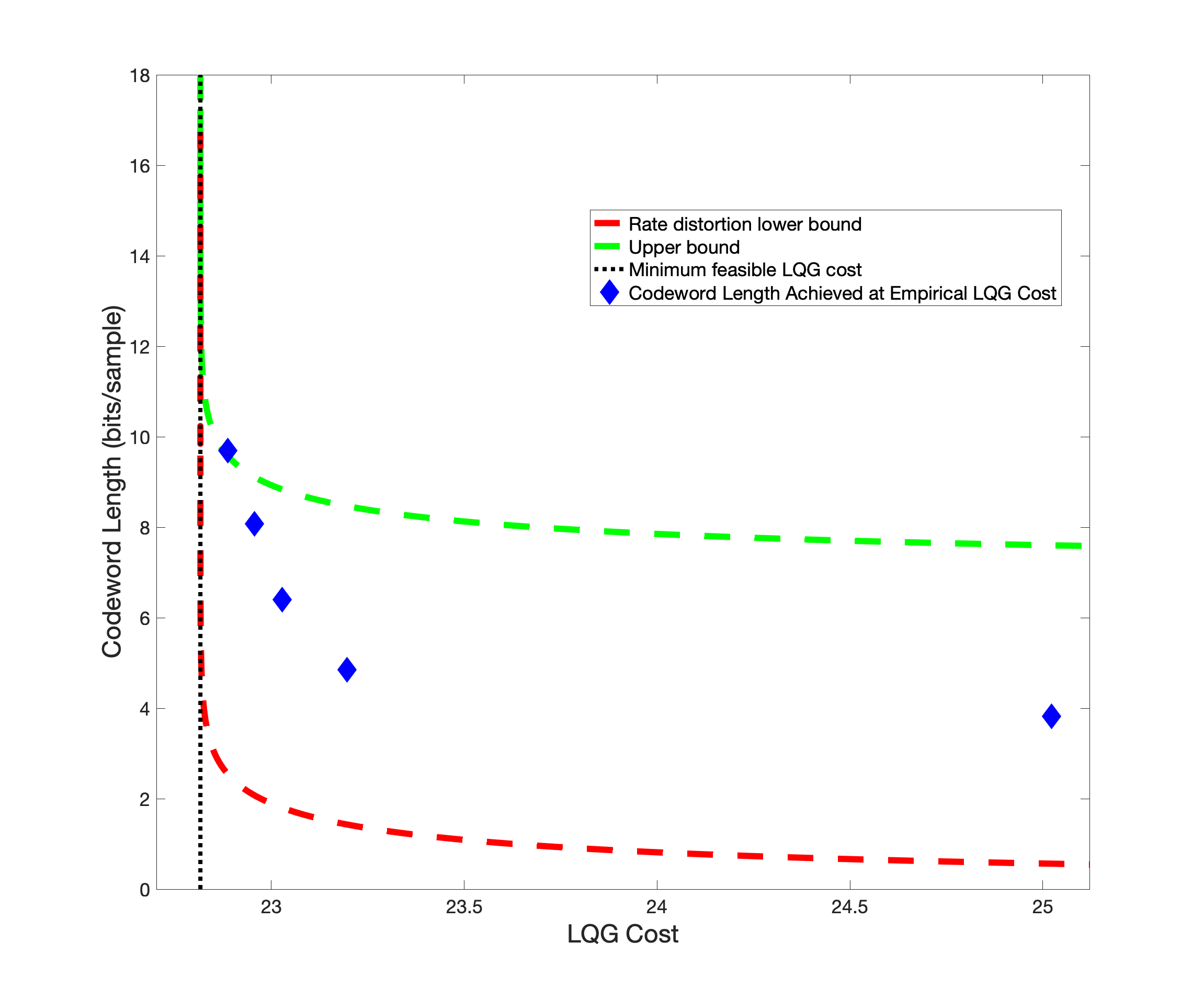}
   \vspace{-.5cm}
	\caption{ The control cost constraint, $\gamma$ is plotted on the horizontal, while the data rate in bits is plotted along the vertical. }\label{fig:empirical}
\end{figure} 
\begin{figure}
	\centering
	\includegraphics[width =.9\columnwidth]{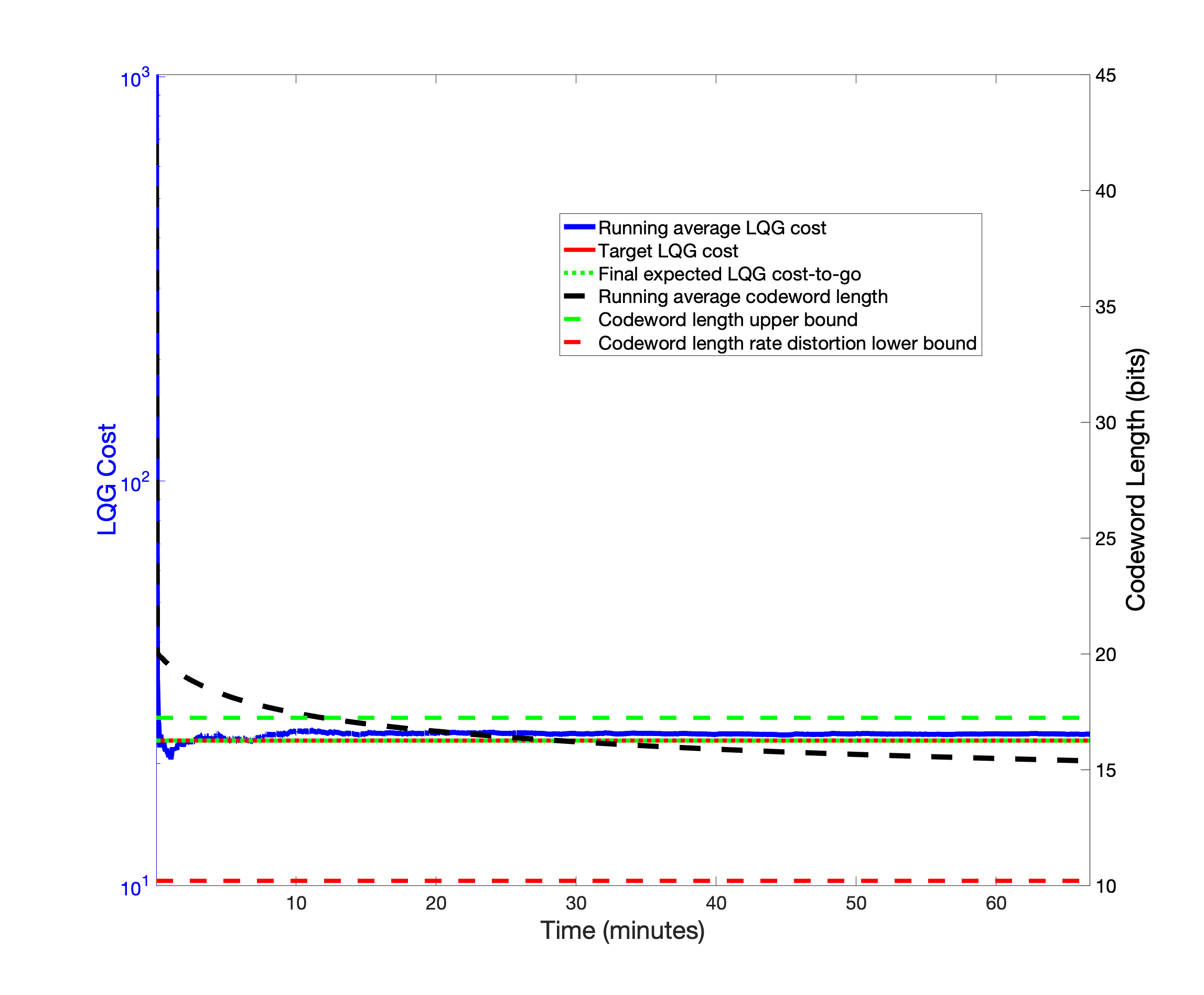}
  \vspace{-.5cm}
	\caption{The running average control and communication costs on the same plot for a particular fixed target control cost. We depict upper and lower bounds on codeword length, the target control cost, as well expected cost to go after $\s{T}= 400000$ samples. The expected cost to go is seen to converge to the target control cost. }\label{fig:joint}
\end{figure} 
\section{Conclusions} 
In interpreting these numerical results, one must take care to not interpret them as a ``real-world" experiment. They are performed using pseudorandom numbers and finite precision arithmetic using a standard top-of-the-line consumer laptop.  Such simulations doubtlessly suffer from numerical inaccuracies that may be significant in some applications.

Instead of encoding the linear index of the truncated symbol $\text{trunk}_{k}(\rvec{s}_{\t})$ with a (sorted) SFE code, one could encode $\text{trunk}_{k}(\rvec{s}_{\t})$ using arithmetic coding over the dimensions of the vector; viewing the elements of the vector as a Markov source. For example, the  encoder and decoder could store one model for the first component $[\text{trunk}_{k}(\rvec{s}_{\t})]_{0}$ of the source, then   $[k]_{0}+1$ models for $[\text{trunk}_{k}(\rvec{s}_{\t})]_{1}$,  one for each potential realization of $[\text{trunk}_{k}(\rvec{s}_{\t})]_{0}$, and so on. This sort of approach allows a more accurate characterization of the source probability mass function at the expense of greater spatial complexity. This also reduces the frequency of model rescalings.

\appendices 
\section{The inverted pendulum's system dynamics}\label{apx:dynamics}
\begin{figure}
	\centering
	\includegraphics[scale = .18]{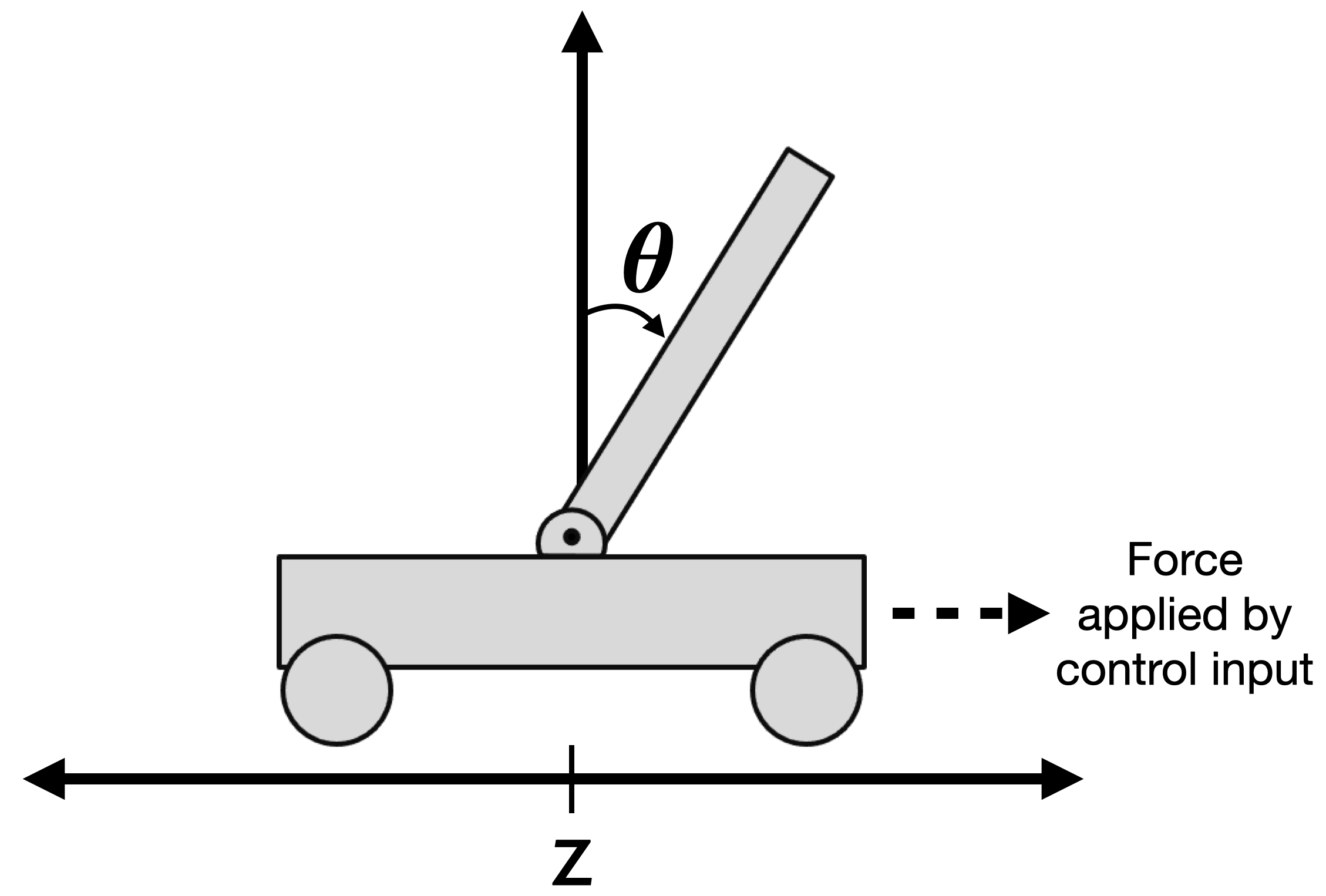}
	\caption{ The inverted pendulum, or ``cart pole" system consists of a motorized wheeled cart that can move in one dimension along the $z$ axis. The ``inverted pendulum" is affixed to the top of the cart, and consists of a slender armature anchored to a fulcrum on the cart. The control input can accelerate the cart along its axis of motion in an effort to stabilize the pendulum about its unstable equilibrium at $\theta = 0$.  }\label{fig:cartpole}
\end{figure}
\begin{table}
\centering
\resizebox{.9\columnwidth}{!}{
\begin{tabular}{ccllc}
\hline
\multicolumn{1}{|c|}{Variable} & \multicolumn{2}{|c|}{Descriptions}     & \multicolumn{2}{c|}{Value}   \\ \hline\hline
\multicolumn{1}{|c|}{$\s{\mu}_{\text{cart}}$}                     & \multicolumn{2}{|c|}{mass of cart}      & \multicolumn{2}{|c|}{.5 kilograms}     \\ \hline \multicolumn{1}{|c|}{$\s{\mu}_{\text{pend}}$}                     & \multicolumn{2}{|c|}{mass of pendulum}      & \multicolumn{2}{|c|}{.2 kilograms}     \\ \hline \multicolumn{1}{|c|}{$\s{\kappa}$}                     & \multicolumn{2}{|c|}{coefficient of friction for cart}      & \multicolumn{2}{|c|}{.1 newton/(meters  sec)} \\ \hline \multicolumn{1}{|c|}{$\s{\psi}$} & \multicolumn{2}{|c|}{mass moment of inertia for pendulum}      & \multicolumn{2}{|c|}{.006 kilogram meters$^{2}$} 
\\ \hline \multicolumn{1}{|c|}{$\s{\epsilon}$} & \multicolumn{2}{|c|}{length of pendulum to center of mass}      & \multicolumn{2}{|c|}{.3  meters}   \\ \hline \multicolumn{1}{|c|}{$\s{g}$} & \multicolumn{2}{|c|}{gravitational acceleration}      & \multicolumn{2}{|c|}{9.8  meters/(sec)$^2$}   \\ \hline
\end{tabular}}
\caption{Parameters of the inverted pendulum system in Fig. \ref{fig:cartpole} \cite{ssmodelinvertedpendulum}.}\label{tab:cartpoleparams}
\vspace{-.9cm}

\end{table}
We perform our experiments using a linearized model for the inverted pendulum system from \cite{ssmodelinvertedpendulum}. The inverted pendulum system is depicted and described in Fig. \ref{fig:cartpole}. The systems state is considered in $\s{m}=4$ dimensions, with $\rs{z}$ the lateral position of the cart along its axis of motion (in meters), $\dot{\rs{z}}=\frac{d}{d\t}\rs{z}$ the associated velocity (in meters/second), $\rs{\theta}$ the angle of the pole from the vertical (in radians), and $\dot{\rs{\theta}}=\frac{d}{d\t}\rs{\theta}$ the associated angular velocity (in radians/sec). The control input $\rvec{u}$ is one dimensional. For $\t\in\mathbb{R}$, $\t\ge 0$, denote the continuous time state vector
\begin{align}
    \rvec{x}(\t) = \begin{bmatrix}
     \rs{z}(\t) \\ \dot{\rs{z}}(\t) \\ \rs{\theta}(\t) \\ \dot{\rs{\theta}}(\t)
    \end{bmatrix}. 
\end{align} The true dynamics of the system are nonlinear, however, after linearizing, the continuous-time are assumed to be \cite{ssmodelinvertedpendulum}
\begin{align}\label{eq:ctmodel}
     d \rvec{x}(\t) = A_{\mathrm{ct}} \rvec{x}(\t) +B_{\mathrm{ct}} \rvec{u}(\t)+  W^{\frac{1}{2}}_{\mathrm{ct}} d\rvec{w}(\t),
\end{align} where the system matrix  $A_{\mathrm{ct}}$ and feedback matrix $B_{\mathrm{ct}} $ are functions of the system parameters given in Table \ref{tab:cartpoleparams}. Let 
\begin{align}
    \s{\rho} = \psi(\s{\mu}_{\text{pend}}+\s{\mu}_{\text{cart}})+\s{\mu}_{\text{pend}}\s{\mu}_{\text{cart}}\s{\epsilon}^{2}
\end{align}
Explicitly, we have 
\begin{align}
    A_{\mathrm{ct}}  =  \begin{bmatrix}
     0 & 1 & 0 & 0 \\ 0 & -\dfrac{ \left(\psi + \s{\mu}_{\text{pend}}\s{\epsilon}^{2}\right)\s{\kappa}   }{\s{\rho}} 
    & \dfrac{ \left( \s{\mu}^{2}_{\text{pend}}\s{\epsilon}^{2} \s{g}\right)  }{\s{\rho}}  &0 \\ 0 & 0 & 0 & 1\\ 0 & -\dfrac{\left(\s{\mu}_{\text{pend}}\epsilon \s{\kappa} \right)}{\s{\rho}} & \dfrac{\s{\mu}_{\text{pend}}\s{g}\s{\epsilon}(\s{\mu}_{\text{cart}}+\s{\mu}_{\text{pend}})}{\s{\rho}}& 0 
    \end{bmatrix},
\end{align}
\begin{align}
    B_{\mathrm{ct}}  =  \begin{bmatrix}
    0 \\ \dfrac{\left(\psi + \s{\mu}_{\text{pend}} \s{\epsilon}^{2}\right)}{\rho} \\ 0 \\ \dfrac{\s{\mu}_{\text{pend}}\s{\epsilon}}{\s{\rho}}
    \end{bmatrix},
\end{align} and $d\rvec{w}(\t)$ is  standard Brownian motion that accounts for modeling error and unmodeled dynamics (wind, etc). We now construct a discrete-time version of the model in (\ref{eq:ctmodel}), assuming a sampling period $\s{\tau}$.  We assume a sample-and-hold feedback policy where $u(\t) = u_{\lfloor\frac{\t}{\tau}\rfloor}$.
Let
\begin{subequations}\label{eq:dtssmodel}
\begin{align}
{A}_{\s{\tau}}=e^{{A}_{\mathrm{ct}}\s{\tau}},
\end{align}
\begin{align}
{B}_{\tau}=\int_0^{\tau} e^{{A}_{\mathrm{ct}} \s{s}}B_{\mathrm{ct}}d\s{s},
\end{align} and 
\begin{align}
    {W}_{\tau} = \left(\int_0^{\s{\tau}} e^{{A}\s{s}} {W}_{\mathrm{ct}}  e^{ A_{\mathrm{ct}}^\tp\s{s}}d\s{s}\right)^{\frac{1}{2}},
    \end{align}
\end{subequations} where for a matrix $M$ argument, $e^{M}$ refers to the matrix exponential of $M$.  For $\t \in \mathbb{N}_{0}$, we let $\rvec{x}(\t\tau) = \rvec{x}_{\t}$ and $\rvec{u}(\t\tau) = \rvec{x}_{\t}$.

\bibliographystyle{IEEEtran}
\bibliography{refs.bib}

% Generated by IEEEtran.bst, version: 1.14 (2015/08/26)
\begin{thebibliography}{10}
\providecommand{\url}[1]{#1}
\csname url@samestyle\endcsname
\providecommand{\newblock}{\relax}
\providecommand{\bibinfo}[2]{#2}
\providecommand{\BIBentrySTDinterwordspacing}{\spaceskip=0pt\relax}
\providecommand{\BIBentryALTinterwordstretchfactor}{4}
\providecommand{\BIBentryALTinterwordspacing}{\spaceskip=\fontdimen2\font plus
\BIBentryALTinterwordstretchfactor\fontdimen3\font minus
  \fontdimen4\font\relax}
\providecommand{\BIBforeignlanguage}[2]{{%
\expandafter\ifx\csname l@#1\endcsname\relax
\typeout{** WARNING: IEEEtran.bst: No hyphenation pattern has been}%
\typeout{** loaded for the language `#1'. Using the pattern for}%
\typeout{** the default language instead.}%
\else
\language=\csname l@#1\endcsname
\fi
#2}}
\providecommand{\BIBdecl}{\relax}
\BIBdecl

\bibitem{silvaFirst}
E.~I. {Silva}, M.~S. {Derpich}, and J.~{Ostergaard}, ``A framework for control
  system design subject to average data-rate constraints,'' \emph{IEEE Trans.
  Automat. Cont.}, vol.~56, no.~8, pp. 1886--1899, 2011.

\bibitem{tanakaISIT}
T.~{Tanaka}, K.~H. {Johansson}, T.~{Oechtering}, H.~{Sandberg}, and
  M.~{Skoglund}, ``Rate of prefix-free codes in {LQG} control systems,'' in
  \emph{Proc. IEEE ISIT}, 2016, pp. 2399--2403.

\bibitem{kostinaTradeoff}
V.~{Kostina} and B.~{Hassibi}, ``Rate-cost tradeoffs in control,'' \emph{IEEE
  Trans. on Automat. Cont.}, vol.~64, no.~11, pp. 4525--4540, 2019.

\bibitem{ourJSAIT}
T.~Cuvelier, T.~Tanaka, and R.~W. Heath, ``Time-invariant prefix coding for
  {LQG} control,'' \emph{IEEE J. Sel. Areas Inf. Theory}, pp. 1--1, 2022.

\bibitem{frenchOG}
S.~Boucheron, A.~Garivier, and E.~Gassiat, ``Coding on countably infinite
  alphabets,'' \emph{IEEE Trans. Inf. Theory}, vol.~55, no.~1, pp. 358--373,
  2009.

\bibitem{frenchExp}
D.~Bontemps, ``Universal coding on infinite alphabets: Exponentially decreasing
  envelopes,'' \emph{IEEE Trans. Inf. Theory}, vol.~57, no.~3, pp. 1466--1478,
  2011.

\bibitem{frenchSeq}
D.~Bontemps, S.~Boucheron, and E.~Gassiat, ``About adaptive coding on countable
  alphabets,'' \emph{IEEE Trans. Inf. Theory}, vol.~60, no.~2, pp. 808--821,
  2014.

\bibitem{frenchSeq2}
S.~Boucheron, E.~Gassiat, and M.~I. Ohannessian, ``About adaptive coding on
  countable alphabets: Max-stable envelope classes,'' \emph{IEEE Trans. Inf.
  Theory}, vol.~61, no.~9, pp. 4948--4967, 2015.

\bibitem{wittenArithmetic}
\BIBentryALTinterwordspacing
I.~H. Witten, R.~M. Neal, and J.~G. Cleary, ``Arithmetic coding for data
  compression,'' \emph{Commun. ACM}, vol.~30, no.~6, p. 520–540, Jun 1987.
  [Online]. Available: \url{https://doi.org/10.1145/214762.214771}
\BIBentrySTDinterwordspacing

\bibitem{eliasUniversal}
P.~Elias, ``Universal codeword sets and representations of the integers,''
  \emph{IEEE Trans. Inf. Theory}, vol.~21, no.~2, pp. 194--203, 1975.

\bibitem{ssmodelinvertedpendulum}
\BIBentryALTinterwordspacing
D.~Tilbury, B.~Messner, J.~Luntz, R.~Hill, J.~Taylor, and S.~Das. (2019
  {Accessed Feb. 2023}) Control tutorials for {Matlab} and {Simulink}: Inverted
  pendulum. [Online]. Available:
  \url{https://ctms.engin.umich.edu/CTMS/index.php?example=InvertedPendulum\&section=SystemModeling}
\BIBentrySTDinterwordspacing

\bibitem{ourConverseLetter}
\BIBentryALTinterwordspacing
T.~C. Cuvelier, T.~Tanaka, and R.~W. Heath, ``A lower-bound for variable-length
  source coding in {Linear-Quadratic-Gaussian Control} with shared
  randomness,'' \emph{IEEE Control Systems Letters}, vol.~6, pp. 2918--2923,
  2022. [Online]. Available: \url{https://arxiv.org/abs/2203.12467}
\BIBentrySTDinterwordspacing

\bibitem{fixedLenCoding}
A.~Khina, Y.~Nakahira, Y.~Su, and B.~Hassibi, ``Algorithms for optimal control
  with fixed-rate feedback,'' in \emph{Proc. IEEE CDC}, 2017, pp. 6015--6020.

\bibitem{nairevans}
\BIBentryALTinterwordspacing
G.~N. Nair and R.~J. Evans, ``Stabilizability of stochastic linear systems with
  finite feedback data rates,'' \emph{SIAM J. Contr. Optim.}, vol.~43, no.~2,
  pp. 413--436, 2004. [Online]. Available:
  \url{https://doi.org/10.1137/S0363012902402116}
\BIBentrySTDinterwordspacing

\bibitem{yukselStabilization}
S.~Yuksel, ``Stochastic stabilization of noisy linear systems with fixed-rate
  limited feedback,'' \emph{IEEE Trans. Automat. Cont.}, vol.~55, no.~12, pp.
  2847--2853, 2010.

\bibitem{r3_add3}
O.~Sabag, V.~Kostina, and B.~Hassibi, ``Stabilizing dynamical systems with
  fixed-rate feedback using constrained quantizers,'' in \emph{Proc. IEEE
  ISIT}, 2020, pp. 2855--2860.

\bibitem{SDP_DI}
T.~{Tanaka}, P.~M. {Esfahani}, and S.~K. {Mitter}, ``{LQG} control with minimum
  directed information: Semidefinite programming approach,'' \emph{IEEE Trans.
  Automat. Contr.}, vol.~63, no.~1, pp. 37--52, 2018.

\bibitem{ba}
T.~Lattimore and C.~Szepesvári, \emph{Bandit Algorithms}.\hskip 1em plus 0.5em
  minus 0.4em\relax Cambridge Univ. Press, 2020.

\bibitem{frenchhds}
\BIBentryALTinterwordspacing
E.~Pauwel, ``Statistics, optimization and algorithms in high dimension,'' Sep.
  2020. [Online]. Available:
  \url{https://www.math.univ-toulouse.fr/~epauwels/M2RI/poly.pdf}
\BIBentrySTDinterwordspacing

\bibitem{infotops}
\BIBentryALTinterwordspacing
P.~Harremo{\"e}s, \emph{Information Topologies with Applications}.\hskip 1em
  plus 0.5em minus 0.4em\relax Berlin, Heidelberg: Springer Berlin Heidelberg,
  2007, pp. 113--150. [Online]. Available:
  \url{https://doi.org/10.1007/978-3-540-32777-6_5}
\BIBentrySTDinterwordspacing

\bibitem{coverandthomas}
T.~M. Cover and J.~A. Thomas, \emph{Elements of Information Theory},
  2nd~ed.\hskip 1em plus 0.5em minus 0.4em\relax Wiley-Interscience, 2006.

\bibitem{cuvelierDissertation}
T.~Cuvelier, ``Enabling reliable control with communication constraints,''
  Ph.D. dissertation, The University of Texas at Austin, 2023.

\bibitem{matlab}
{MATLAB}, \emph{version R2022b}.\hskip 1em plus 0.5em minus 0.4em\relax Natick,
  Massachusetts: The MathWorks Inc., 2022.

\bibitem{yalmip}
J.~L{\"{o}}fberg, ``Yalmip : A toolbox for modeling and optimization in
  matlab,'' in \emph{Prof. IEEE CACSD}, Taipei, Taiwan, 2004.

\bibitem{mosek}
\BIBentryALTinterwordspacing
M.~ApS, \emph{The MOSEK optimization toolbox for MATLAB manual. Version
  10.0.38}, 2023. [Online]. Available: \url{https://www.mosek.com/}
\BIBentrySTDinterwordspacing

\end{thebibliography}

\end{document}